\newcommand*\sref{%
	Appendix \ref}
\title{Scaling of contact networks for epidemic spreading in urban transit systems}
\author[a]{Xinwu Qian}
\author[b]{Lijun Sun}
\author[a,1]{Satish V. Ukkusuri}
\affil[a]{Lyles School of Civil Engineering, Purdue University}
\affil[b]{Department of Civil Engineering, McGill University}
\begin{abstract}
Improved mobility not only contributes to more intensive human activities but also facilitates the spread of communicable disease, thus constituting a major threat to billions of urban commuters. In this study, we present a multi-city investigation of communicable diseases percolating among metro travelers. We use smart card data from three megacities in China to construct individual-level contact networks, based on which the spread of disease is modeled and studied. We observe that, though differing in urban forms, network layouts, and mobility patterns, the metro systems of the three cities share similar contact network structures. This motivates us to develop a universal generation model that captures the distributions of the number of contacts as well as the contact duration among individual travelers. This model explains how the structural properties of the metro contact network are associated with the risk level of communicable diseases. Our results highlight the vulnerability of urban mass transit systems during disease outbreaks and suggest important planning and operation strategies for mitigating the risk of communicable diseases.
\end{abstract}
\begin{document}

\flushbottom
\maketitle
%
%
\thispagestyle{empty}

\section*{Introduction}
The rapid growth of population and activity intensity in megacities have propelled an evolutional shift of urban mobility from individual-centric travels to sustainable urban mobility. This substantial shift concerns environment, energy consumption, equity, among others~\cite{manaugh2015integrating,bertolini2003urban,black2002sustainable,goldman2006sustainable}.
Lying at the heart for promoting sustainable travel is our understanding of the interplay among urban form, transportation system, and human mobility. Research across a diverse stream of studies and data sources have shown the scaling properties of individual mobility~\cite{gonzalez2008understanding,song2010modelling,hasan2013understanding,sun2013understanding}: the vast majority of people travel between a few popular locations and their travel distances are bounded by the scale of the city and its transportation systems.

One indispensable component of urban transportation is the mass transit system, which is so far the only sustainable solution to serve urban mobility needs on a large scale.
In 2018, public mass transit served over 53 billion passengers worldwide. The three busiest metro systems reached the daily ridership of 9.48 million (Tokyo), 6.49 million (Moscow) and 5.6 million (Shanghai)~\cite{metro_report}, respectively. While these systems allow a large number of commuters to travel efficiently, they also result in high population density within close proximity for long durations. These features establish an environment conductive to the spread of communicable diseases~\cite{xie2007far,yang2009transmissibility,salathe2010high}. In particular, pathogens of infectious travelers can migrate to adjacent travelers through droplets and airborne transmissions, resulting in secondary infections during travel. Public mass transit and the underlying travel patterns are becoming an essential catalyst for influenza pandemics and may greatly accelerate the spreading pace of communicable diseases and thus increase the intensity of disease outbreaks in megacities. Despite acknowledging the linkage between human mobility and the spread of infectious disease, current models do not understand the nexus on how physical contacts/encounters among individuals---which in turn enable the transmission of communicable diseases---are created from human mobility.

An attractive approach to address the challenge is to construct the contact networks during travel and then embed the disease percolation process among individual travelers into the contact networks.
Recent advances in complex network theories and epidemic modeling have established a striking connection between network structure and disease dynamics~\cite{pastor2001epidemic1,pastor2001epidemic2,pastor2002immunization,pastor2001epidemic2,meyers2005network,keeling2005networks,meyers2007contact} 
and the epidemic spreading was modeled on various mobility scales including airline network~\cite{colizza2007predictability,colizza2006role}, ground transportation network~\cite{balcan2009multiscale} and river network~\cite{mari2011modelling,gatto2012generalized}. While mobility networks represent mediate channels for epidemic outbreaks, other studies also focused on correlating
disease dynamics with direct human interactions as contact networks at activity locations ~\cite{meyers2005network,salathe2010high,stehle2011simulation,litvinova2019reactive, eubank2004modelling}.~
These studies connected the disease percolation with either mobility networks or contact networks, but the linkage on how contact networks are shaped as a function of human mobility is still missing. Meanwhile, such an interplay has significant implications on keeping infectious people from engaging in daily activities and consequently stopping the epidemic spreading from the source.

%


\begin{figure*}[!h]
    \centering
    \includegraphics[width=\linewidth]{}
    \caption{Summary characteristics of the travel pattern within the metro networks. (A) Spatial layout of the metro networks of the three cities (from left to right: Guangzhou, Shanghai, Shenzhen). (B) Temporal distribution of metro ridership obtained from the smart card transaction data. (C) Probability density of function of metro trip duration T. We observe that the distributions metro trip duration of all three cities follow exponentially decaying tails ($p(T)\propto e^{-T/\lambda}$). For Guangzhou, we find that trips with $T>50$min can be well fitted with $\lambda=13.78$min. For Shanghai, trips with $T>50$min can be well fitted with $\lambda=16.59$min. For Shenzhen, trips with $T>50$min can be well fitted with $\lambda=13.43$min.}
    \label{fig:metro_networks}
\end{figure*}

To close this gap, this study aims to characterize how human mobility shapes contact networks during travel and how it subsequently affects the threshold of disease percolation among individual travelers. In previous studies, contact networks were either of high-resolution for small systems (e.g., at conference~\cite{stehle2011simulation} and school~\cite{salathe2010high,litvinova2019reactive}) or of low-resolution for large systems by simulating from survey data~\cite{meyers2005network,eubank2004modelling}. Here we construct high-resolution contact networks for city-wide transit systems by leveraging smart card data from three major cities in China: Guangzhou, Shanghai, and Shenzhen (see SI~\sref{SI:data} for a detailed description of the data). We focus on the metro system, the most used public transit mode, to rebuild the contact networks among metro travelers, but the approach is broadly applicable to other transit systems. The three cities are of drastically different scales and distinct metro network layouts (Fig.\ref{fig:metro_networks}A). Specifically, Shanghai has the largest population, metro system, and the highest metro passenger volume (over 4 million daily records), followed by Shenzhen (2.1 million) and Guangzhou (1.6 million). The metro ridership of the three cities presents highly regular and recurrent patterns during weekdays with prominent travel peaks~(Fig.\ref{fig:metro_networks}B), which implies a large number of daily commuters and repeated metro visits. The large-scale trip data, as the result of intensive daily activities in metro systems, allow us to directly probe the representative mobility patterns of metro travelers in these cities (see Fig.~\ref{fig:metro_networks}C). Despite distinct network layouts, size, and trip demand, the mobility patterns of the three cities are observed to be strikingly similar. We find that there is a large number of travelers with travel time under 50 minutes, and the number of travelers decays exponentially with increasing trip length. This finding holds true across all three cities, with Shanghai having a lower decay rate (16.59min) and the decay rates for Guangzhou (13.78min) and Shenzhen (13.43min) being almost identical. This indicates that trip length in metro systems is bounded by the scale of the metro network, which reflects the scale of a city where trip duration may not grow infinitely as in the scale-free networks. The finding also provides strong evidence to support the universality of human mobility within public transit, where the travel time follows the exponential distribution with the decay rate being proportional to the scale of the city. As physical encounters are driven by human mobility, this motivates us to investigate the possible existence of scaling laws for the contact patterns in public mass transit networks, as the results of the universal mobility patterns.

\begin{figure*}[!h]
    \centering
    \includegraphics[width=\linewidth]{}
    \caption{Simulated MCNs with 500 nodes and the unweighted and weighted degree distributions of simulated MCNs with 1000 nodes. (A)-(C) visualizes the layouts of the simulated MCNs in Guangzhou, Shanghai and Shenzhen. In the visualization, larger node size reflects higher node degree and the transparency of the link is proportional to the duration of contact. (D),(F) and (H) present the probability density function of the unweighted degree distributions of Guangzhou, Shanghai and Shenzhen. (E), (G) and (J) present the probability density function of weighted degree distributions of Guangzhou, Shanghai and Shenzhen. }
    \label{fig:comparison_mcn}
\end{figure*}


\section*{Metro Contact Network}
Smart card data only provide the entry and exit information of a trip. To gain insights on how travelers come in contact with others during travel, we develop a simulation model based on the observed metro network layout, demand profile, and mobility patterns. The simulation constructs high fidelity metro contact networks (MCNs) by first sampling passenger arrivals at each metro station and their trip destinations, then calculating if two individuals will come into contact based on their trip profiles, and finally assigning expected contact duration between each pair of individuals (The detailed description of the simulation is presented in the SI~\sref{SI:mcn_generation_algorithm}). The inputs to the simulation are the number of travelers ($N$), the time period of interest, the operational timetable and metro network layout. The simulation then produces a $N\times N$ matrix describing the physical contact pattern between each pair of individual travelers. In particular, each positive entry of the matrix denotes the expected contact duration between two travelers within \textit{effective transmission range}, e.g., two individuals are of close proximity so that the airborne transmission of a communicable disease is likely. For typical droplet transmission, the effective range is less than 3 feet while certain disease such as SARS may reach 6 feet~\cite{siegel20072007}.
\begin{table}[!h]
\centering
\caption{Summary statistics of the generated MCNs with 1000 nodes. The MCNS correspond to the travel pattern during 8-8:30 AM on weekday and the statistics are measured using unweighted MCNs. Results are averaged from 10 random realizations.}
\begin{tabular}{p{1.5cm}p{1cm}p{3cm}p{1.5cm}p{1.5cm}}
	\toprule
          & $<k>$ & Clustering coefficient  & Diameter & CPL  \\
\hline
Guangzhou & 35.09          & 0.49   & 7.2     & 2.79  \\
Shanghai  & 29.14          & 0.46    & 7.6   & 2.89        \\
Shenzhen  & 41.59          & 0.54   & 6     & 2.78     \\
\bottomrule
\end{tabular}
\label{tab:mcn_stat}
\end{table}
We then visualize the structure of the MCNs by simulating a sample realization for each of the cities during 8-8:30 AM, and we set the number of travelers to 500 for better visibility of the network structure (Fig.~\ref{fig:comparison_mcn}A-C). We observe that the MCNs are visually different among the cities, which is due to the differences in metro network layouts. But these MCNs also share several structural commonalities, including local clusters of nodes and the discrepancies of node degree. 

To gain better understanding of their structural properties, for each city, we further generated MCNs from 500 nodes to $10^4$ and the details are summarized in SI~\sref{SI:mcn_generation}. We observe that there is fewer super nodes in the MCNs as opposed to the scale-free networks. Instead, there are a large number of nodes with low to medium degree. And we further confirm the structural similarities by different quantitative metrics, as shown in Tab.~\ref{tab:mcn_stat}. All the MCNs are observed to have high node degree heterogeneity and high clustering coefficient, and have small network diameter and characteristic path length (CPL). These properties are statistically different from the metrics of their random counterparts (see SI~\sref{SI:randomization}), which corroborates the distinct structural properties associated with MCNs. These confirm that MCNs are a type of small-world network~\cite{telesford2011ubiquity}, and this observation leads to significant implications in the context of disease percolation. Specifically, the outbreak of an exceedingly infectious disease may quickly synchronize among all the travelers because of the small-world property, and therefore the metro system becomes highly vulnerable. But unlike many real-world networks, we report that these network metrics are invariant to the size of the MCNs (Tab.~\ref{SI_tab:MCN_stat_gz}-\ref{SI_tab:MCN_stat_sz}), but instead being determined by the metro network layout and human mobility patterns. In this regard, MCNs, like many other real-world contact networks in school, conference sites and major activity locations, should be regarded and studied as the products of the interactions between human mobility and physical infrastructure. 

By plotting the degree distributions of the number of contacts and contact duration for metro travelers (Fig.~\ref{fig:comparison_mcn}D-F), we find an even more remarkable similarity among the MCNs. Despite the differences in metro network layouts, visualizations and statistics of the simulated MCNs, the degree distributions are found to follow a similar distribution across the three cities, and such observation is also valid for different time periods of the day. In particular, the unweighted degree distributions of the MCNs show a large number of nodes of low to medium degrees (e.g., degree smaller than 50) and the node degrees within this range are found to be nearly uniformly-distributed. But with increasing number of contacts and length of contact duration, the tails of the node degree distributions are found to decay exponentially, similar to the observations for metro mobility patterns. In addition, the rates of decay are found to be time-dependent and also differ among the cities. These observations lead to the conjecture of a universal mechanism underlying the contact of metro travelers, and we explore the mechanism in more depth in the following sections.


\section*{Disease dynamics in contact network}
With the reconstructed contact network, the risk of communicable diseases can be quantified by modeling the dynamics of disease percolation among individual travelers. We introduce an individual-based model (IBM) following~\cite{chakrabarti2008epidemic}. To characterize disease dynamics within the contact network, the classical susceptible-infectious-susceptible (SIS) process is embedded in the IBM over MCNs. Unlike previous studies~\cite{pastor2001epidemic1,meyers2005network}, this framework does not require nodes and transmission between nodes to be homogeneous, which allows us to model heterogeneous infectious rates due to the varying contact duration. Denote the probability that node $i$ is infected at time $t$ as $p_{i,t}$ and the recovery rate as $r$, we have
\begin{equation}
p_{i,t}=1+p_{i,t-1}(q_{i,t}-r)-q_{i,t},\forall\,i\in V
\end{equation}
where $q_{i,t}$ represents the probability that node $i$ is in S at time $t$, which depends on that all its neighbors $j\in\mathcal{N}(i)$ are either in S or in I but fail the transmission:
\begin{equation}
q_{i,t}=\prod_{j\in\mathcal{N}(i)} (1-p_{j,t}+(1-\beta_{i,j})p_{j,t})
\end{equation}
In the equation, $\beta_{i,j}=\beta t_{i,j}$ represents the transmission rate between node $i$ and $j$, which takes the product of per unit time disease transmission strength $\beta$ and the contact duration $t_{i,j}$. With $N$ such nodes, we arrive at a nonlinear dynamic system (see SI~\sref{SI:IBM}) with two equilibrium states: (1) the disease-free equilibrium (DFE) where all individuals are in S state (e.g., $p_{i,t}=0$) and (2) the endemic equilibrium where a positive proportion of individuals are in I state. The asymptotic stability of the DFE relies on the network-specific critical threshold $\delta$ that is associated with the largest eigenvalue of the adjacency matrix of MCNs. And we show that the critical threshold is upper bounded by the largest node degree in MCNs as:
\begin{equation}
\delta \leq \max_i \sum_j \beta_{i,j}-r+1=\bar{\delta}
\end{equation}
As a consequence, if $\bar{\delta}< 1$, the disease is guaranteed to go extinct while the disease may be endemic with $\bar{\delta}>1$. Since $\beta$ and $r$ are endogenous parameters, the risk level that pertains to a specific disease primarily depends on the value of $\max_i \sum_j\beta_{i,j}$. Such finding subsequently builds the essential connection between the vulnerability of public mass transit with the degree distribution of its contact networks and identifies the impact of the structural property of MCNs on disease threshold in transit systems. The observation provides two immediate implications. First, we can verify that the risk level of a MCN is driven by the riskiest individual who has the highest number of contacts or contact duration, which concerns the tail pattern of MCNs unweighted and weighted degree distributions. Second, by removing the riskiest individual, the next riskiest person may have a similar risk level according to the revealed degree distributions (Fig.\ref{fig:comparison_mcn}D-J). This highlights the difficulties in improving MCN's vulnerability and stopping the disease during outbreaks.

In addition to the IBM model, we also build an equivalent OD-based mean-field (ODMF) approach that models the disease dynamics on the passenger flow level between each pair of metro stations (SI~\sref{SI:ODMF}). Note IBM is computationally expensive due to the construction of MCNs, and the ODMF can be used to approximately probe the system-wide disease dynamics for the real number of metro travelers.

\section*{Disease control strategies}
The best practice for controlling the disease is to immunize travelers through vaccination and quarantine~\cite{kaplan2002emergency}. We next explore the effectiveness of five immunization strategies with the percentage of individuals immunized as a control parameter. \textit{OD-based} and \textit{station-based} approaches represent population control that immunizes a portion of travelers commuting between a pair of stations or originating from a station. These two control strategies are motivated by the observation that fewer than 20\% of the stations produce over 80\% of travel demand (Fig.~\ref{fig:demand_control}A), and more than 80\% of the metro commuters are associated with fewer than 20\% of station pairs(Fig.~\ref{fig:demand_control}B). These findings suggest that population control may yield satisfactory results in reducing the risk level by focusing on populated stations and trip pairs, as these travelers are likely to have more number of contacts. On the other hand, we also consider \textit{uniform}, \textit{targeted} and \textit{distance based} approaches that are individual-centered methods. The uniform strategy immunizes randomly selected travelers, the targeted strategy iteratively removes travelers of the longest contact duration, and the distance-based strategy aims at immunizing travelers of the longest travel time. Specifically, the targeted method is reported to be most effective in the complex network literature~\cite{pastor2001epidemic1,pastor2002immunization}. The effectiveness of these control strategies is then examined based on the relative risk level (RRL), which measures the reduction in $\max_i \sum_j\beta_{i,j}$ with increasing number of immunized travelers.
\begin{figure}[!h]
    \centering
	\includegraphics[width=0.7\linewidth]{}\\
	\caption{Distribution of travel demand and the effectiveness of different control strategies for the MCNs of the three cities. (A) presents the probability density function of the trip demand at the station level. (B) presents the probability density function of the trip demand of each pair of stations. (C), (D) and (E) visualize the effectiveness of OD based, targeted, uniform, distance based and station based control strategies for each city. The effectiveness of control strategies is compared with the proportion of trip demand affected by the corresponding strategy.}
    \label{fig:demand_control}
\end{figure}

Our results suggest that the most effective method is the targeted immunization followed by the distance-based method and the OD based method, and all the three are superior to the uniform immunization. This finding is consistent across the three cities~(Fig.\ref{fig:demand_control}C-E). For the targeted immunization, we observe a 27\% reduction in RRL by immunizing the top 1\% riskiest individuals, and a 60\% reduction in RRL can be achieved by removing the top 10\% riskiest travelers. On the other hand, the station-based method is found to be less effective than the uniform policy in two of the three cities. The primary reason is that populated stations may not be origins of travelers with high number of contacts and contact duration.
Unfortunately, it is usually impracticable to identify the risk level of a traveler before the trip, which is the major barrier for the implementation of targeted immunization. As effective alternatives, we may introduce the distance-based and OD-based strategies by tracking the historical trips of an individual from her smart card. The most practical control strategies are the station-based and the uniform strategies, but neither of them is shown to be effective enough. These results highlight the challenges for stopping the spread of the disease in transit systems. It is therefore important to devise better strategies for the operation of metro systems and for improving the structure of the metro networks.


\section*{A generation model for MCNs}
Here we develop and validate a simple generation model for MCNs. The MCN to be generated is a scale-dependent network where the degree distribution is a function of the total number of travelers in the network. We observe that the travelers' mobility patterns follow the exponential distribution while the contact degree distributions also decay exponentially. Following our discussion on the MCN structure, we hypothesize that (1) contacts are driven by travelers' mobility pattern, and (2) the probability of two travelers getting into contact is bounded by their mobility in the metro network and also the layout of the metro network. We focus on investigating a universal generation model with individuals' mobility pattern as the input and we consider that the number of contacts is proportional to the total travel time $t_i$ of each traveler. Recall that the degree distribution is found to vary across time and city. We thereafter introduce two variables: $\alpha$ captures the impact from metro network layout and $\gamma_t$ models the temporal characteristics of travelers' mobility. We consider the expected total number of contacts experienced by $N$ travelers as:
\begin{equation}
C= \sum_i^N \alpha t_i^{\gamma_t} (N-1)
\label{eq:individual_contact}
\end{equation}
Equation~\ref{eq:individual_contact} accounts for the scale-dependent nature by including $N$ on the right-hand side and $\alpha t_i^{\gamma_t}$ determines the rate that a commuter of travel time $t_i$ will meet other $N-1$ travelers in the system. And $t_i^{\gamma_t}$ refers to the rescaled travel time which depends on the temporal trip similarity among travelers. This is motivated by the fact that different time of day will result in riders heading to various destinations and $\gamma_t$ therefore measures how similar their destinations are. Consider $M=2C$ as the total number of stubs (half-edges) in MCNs, we can derive the probability that a node is of degree $k$ as (see SI~\sref{SI_section:generation_model} for derivation details and Fig.~\ref{SI_fig: contacts_travel_time} on empirical evidence that supports $M$):
\begin{equation}
    p(k)=\sum_{i=1}^N \frac{(Mw_i)^ke^{-Mw_i}}{k!}p(t_i)
    \label{eq:generation_model}
\end{equation}
with $w_i = 2\alpha t_i^{\gamma_t} (N-1)/M$ being the probability that a randomly selected stub is attached to node $i$.
Given the pdf in equation~\ref{eq:generation_model}, the MCNs can then be generated following the configuration model by first sampling the degree sequence $K=\{k_1,k_2,...,k_N\}$ from $p(k)$, then randomly selecting and connecting a pair of stubs until all stubs are exhausted. For each pair of matched stubs between node $i$ and $j$, we further assign the weight $d_{ij}\propto \min(t_i^{\gamma_t},t_j^{\gamma_t})$ as the edge weight and obtain the weighted degree distribution.

To calibrate $\alpha$ and $\gamma_t$, we perform cross-validation to determine the optimal $\alpha$ for each city and the corresponding $\gamma_t$ at each time interval, with the objective being minimizing the Kolmogorov–Smirnov (KS) statistics between the pdfs of the generated and simulated MCNs for both unweighted and weighted degree distributions (see SI~\sref{SI:parameter_validation}). To validate the correctness of our generation model, we conduct two sample KS tests to compare the cdfs of unweighted and weighted degree distributions between generated and simulated MCNs, with the null hypothesis being that two data samples for comparison are drawn from the same continuous distribution.

The validation results are summarized in Tab~\ref{tab:ksstat}, and we also visualize the fitting of the generated MCNs in Fig.~\ref{fig:curve_fit}. We observe that for all experiments, we fail to reject the null hypothesis for the two sample K-S test with the lowest p-value among these cases being 0.742. Even this lowest value is way above the significant threshold for rejecting the null hypothesis (0.05), and in most cases the $p$ value is greater than 0.95 for both weighted and unweighted distributions. The statistics along with the goodness of fit in Figure~\ref{fig:curve_fit} are indicative that the proposed generation function well captures the underlying mechanisms that govern the meetings of passengers and the duration of exposures during their travels in metro systems. More importantly, the validation of the generation model on three cities provides strong evidence for the existence of a universal rule that shapes the contacts among travelers in transit networks.
\begin{figure*}[!h]
    \centering
    \includegraphics[width=\linewidth]{}
    \caption{Visualization for the goodness of fit of the generated MCNs as compared to the simulated MCNs from the smart card data. Each row corresponds to the results of the same city and each column corresponds to the results from a particular time of the day. (A) Fitting results of the probability density function for the unweighted degree distributions.  (B) Fitting results of the probability density function for the weighted degree distributions. All results are obtained from the average performance of 50 generated MCNs using the optimal $\gamma_t$ and from the average of 50 simulated MCNs. Each MCN has 1000 nodes. All scenarios fail to reject the null hypothesis of the KS test with very high p-value, where the summary statistics of the KS test and calibrated model parameters are shown in Tab.\ref{tab:ksstat}.}
    \label{fig:curve_fit}
\end{figure*}

\begin{figure}[!h]
    \centering
    \includegraphics[width=.7\linewidth]{}
    \caption{Comparison between the calibrated parameter $\gamma_t$ and trip similarity index for (A) Guangzhou, (B) Shanghai and (C) Shenzhen.}
    \label{fig:validation_similarity}
\end{figure}

\begin{table}[!h]
\centering
\caption{Summary of fitted results and model parameters from KS test for three cities.}
\begin{tabular}{p{3cm}p{1.5cm}p{1.5cm}p{1.5cm}p{1.5cm}}
	\toprule
Time         & 8AM   & 12PM & 6PM     \\
\toprule
\multicolumn{4}{c}{Guangzhou}               \\

unweighted p & 0.742  & 0.742 & 0.742    \\
weighted p   & 0.999  & 0.999 & 0.999    \\
$\gamma_t$        & 0.652 & 0.604  & 0.610  \\
$\alpha$       & \multicolumn{3}{c}{0.005}  \\
\toprule
\multicolumn{4}{c}{Shanghai}                \\

unweighted p & 0.956  & 0.998 & 0.956    \\
weighted p   & 0.999  & 0.999 & 0.998    \\
$\gamma_t$        & 0.634 & 0.604  & 0.616  \\
$\alpha$        & \multicolumn{3}{c}{0.004}     \\
\toprule
\multicolumn{4}{c}{Shenzhen}                \\

unweighted p & 0.956  & 0.999 & 0.999    \\
weighted p   & 0.999  & 0.999 & 0.999    \\
$\gamma_t$        & 0.662 & 0.612  & 0.626  \\
$\alpha$       & \multicolumn{3}{c}{0.005} \\
\bottomrule
\end{tabular}
\label{tab:ksstat}
\end{table}

\section*{Discussion}
By inspecting the structure of those simulated MCNs, we observe that there is a lack of high degree nodes. This observation is confirmed by the generation model, which can be decomposed as the weighted combination of Poisson pdfs as shown in equation~\ref{eq:generation_model}. We see that the degree of a node may be drawn from a collection of Poisson distributions with mean $Mw_i$.
This explains the lack of high degree nodes in the contact network as compared to the scale-free network with same number of nodes and average degree, since the probability of having large $k$ in a random network diminishes faster than exponential. On the other hand, the generation model also explains why the tails of the degree distributions of MCNs decay slower than a random network. Note that in MCNs, high degree nodes are generated from the Poisson distribution with a large $Mw_i$ value, which requires longer trip length $t_i$. The decay of the tails is therefore the convolution of the tail of a random network and the tail of the mobility distribution which decays exponentially. In addition, we derive the expressions for the mean $<k>$ and variance ($\sigma^2=<k^2>-<k>^2$) of the MCNs' degree distributions in SI~\sref{SI:first_second_moment}. We find that $<k>\propto N$ and the variance $\sigma^2\propto N^2$, so that both measures diverge with $N\rightarrow \infty$ and the variance is of higher magnitude than the average node degree. This indicates that the degree distributions of MCNs have similar characteristics as compared to the exponential distribution and the number of contacts and the contact duration are bounded by the human mobility in the transit networks. Moreover, these findings are aligned with the empirical observations of $<k>$ and $<k^2>$ in simulated MCNs (Fig.~\ref{SI_fig:deg_diverge}), which further strengthens the validity of the developed generation mechanism. 

One important parameter in the generation model is $\gamma_t$, where we define $\gamma_t$ as the similarity among the trips. To validate this argument, we also quantitatively measure the trip similarity (see SI~\sref{SI:similarity_measures}) among travelers based on the trip OD matrix $Q$. The similarity measure is introduced to quantify the strength of overlapping of a particular trip pair on other trip pairs in terms of contact duration and demand level. We then compare the dominant eigenvalues of $Q$ and we use the variance of the dominant eigenvalues to quantify the similarity of trip purposes. In particular, higher variance suggests that the majority of the trips take place on a few ODs and the trip purposes among these riders are more similar. We compare the computed similarity index with the calibrated $\gamma_t$ and the results are shown in Fig.~\ref{fig:validation_similarity}. We see that the calculated similarity presents a strong linear relationship with $\gamma_t$ among all three cities, with $R^2$ value being above 0.77 if we fit a simple linear function to interpret this relationship. These suggest that similarity can be used as a proxy for $\gamma_t$ for prediction purposes.

While it is difficult to devise effective yet practical control strategies, the degree distribution provides valuable insights in improving the resilience of the transit system by controlling how its contact networks are shaped. To reduce the risk of MCNs, it is equivalent to minimize the probability of the MCNs having high degree nodes. Based on equation~\ref{eq:generation_model}, we know that $p(k)$ is linearly proportional to the number of passengers and the scale of the metro network. Reducing these values will lead to a linear reduction in the average number of contacts while the shape of the degree distribution will remain the same. By observing the metro network layouts, we observe that larger transit network, possibly with more number of lines and transfer stations, may result in lower $\alpha$. But the data used in our study is not sufficient to explain how we may reduce $\alpha$ and this may require further investigation. Alternatively, efforts can be made to reduce $\gamma_t$ so as to sub-linearly decrease the probability of having high degree nodes and result in the degree distribution that decays faster. This can be achieved by segregating passengers through an optimally designed timetable or advising passengers to distribute their departure time. The ultimate solution, however, lies in the distribution of the human mobility distribution for $t_i$. This will not only reshape how the Poisson pdfs are combined but also change the weight of each Poisson distribution. In particular, we would like to pursue the distributions of $t_i$ with faster decaying tails, so that both $Mw_i$ and $p(w_i)$ for larger $w_i$ values will be minimized at the same time. And this can be realized by changing the layout of the metro network or, ultimately, the urban form itself. We can expect that $w_i$ will decay faster by reducing the number of transfers required for the pair of stations of long trip duration, which results in lower maximum trip length and may also contribute to lowering $\alpha$. As for the urban form, a more decentralized urban structure is the most effective way for reducing the risk of communicable diseases, which implies that people can avoid long distance travels across the city as they can find work or entertainment places closer to their home locations. While both approaches are deemed to be effective, the design of the metro network and urban form is not a sole function of the risk of communicable disease. Thus oftentimes we have to compromise among the disease risk, construction cost, efficiency, and also equity of urban mobility. But the developed model in this study provides an important tool for improving the network resilience without undermining other aspects of the system.

\begin{figure}[!h]
    \centering
    \includegraphics[width=.7\linewidth]{}
	\caption{The correlations between travel time and total contagion duration in MCNs of three cities. 50 MCNs are generated for each plot using data from 8:00-8:30 AM, with each MCN having 1000 nodes. (A)-(C) represent results for Guangzhou, Shanghai, and Shenzhen respectively.}
	\label{fig:contact_correlation}
\end{figure}

One final issue is to identify the group of travelers who experience and introduce high-risk exposure in the transit system. To gain insights on this issue, the correlations among the travel time distributions and distributions of contact durations are plotted and shown in Fig.~\ref{fig:contact_correlation}. We see that the travel time and the contact duration are positively correlated and this observation is consistent across all three cities. We can also verify that there is a wide range of travel time for travelers who experience high contact duration in the metro system. In general, the positive correlation suggests that travelers who experience highest contact duration are likely to be those who have the longest travel time. And the travel time of urban commuters is closely related to their work and home locations, their income levels, and eventually their lifestyle and health conditions. One recent study reported that those commuters with the longest travel time in the metro are likely to be low-income migrates, and they may change their home and work locations more frequently than other urban commuters~\cite{huang2018tracking}. This finding implies another potential risk in transit networks. If commuters with long travel time overlap with the low-income population, then these people are likely to be more prone to infection during disease outbreaks. Compared to other population groups, low-income people usually have fewer options (such as time off and sick leaves) and may pay less attention to personal health and hygiene due to limited disposable income~\cite{ettner1996new}. Consequently, the riskiest group of travelers in metro systems are likely to be the most susceptible and vulnerable group of people during the disease outbreak. And this may inevitably raise additional challenges associated with disease contagion and equity of travel in urban transportation networks.

\bibliography{sample}

\section*{Author contributions statement}
X.Q. and S.V.U. designed the research; X.Q. and L.S. performed research; X.Q. and L.S. analyzed data; X.Q., L.S. and S.V.U. wrote the paper.

\section*{Additional information}

\textbf{Competing interests} The authors declare no competing interests.


\end{document}


\maketitle
\selectfont

%
%


\section{Data}
\label{SI:data}
\subsection{Metro Trip data}
The metro smart card transaction data are from three major cities in China: Shanghai, Guangzhou, and Shenzhen. These data have similar structure, with each record containing the information of smart card ID, transaction ID, transaction time, boarding station/time, and alighting station/time. The transaction type indicates if the transaction is entry or exit at the transaction station. Since each smart card is associated with a unique ID, we can therefore construct the trip sequences for each commuter (each card) based on the transaction time, transaction type and the location. We present a sample of smart card transaction data of Shenzhen on April 21st, 2016 in Tab~\ref{SI_tab:data_records}.
\begin{table}[!h]
\centering
\caption{Sample records of metro transaction data of Shenzhen (2016-04-21). For Transaction Type, 21 indicates that the traveler left the system and 22 denotes the entry of the traveler.}
\begin{tabular}{p{2cm}p{2cm}p{2cm}p{1cm}p{1cm}}
	\toprule
User ID & Transaction Type & Time & Station ID  \\
\hline
80357781&22&08:39:50&1\\
290452424&22&08:39:32&1\\
20353676&22&09:41:43&1\\
361341888&21&07:15:36&1\\
329838057&22&07:47:08&1\\
667519928&22&08:34:07&1\\
329213920&21&07:37:19&1\\
\bottomrule
\end{tabular}
\label{SI_tab:data_records}
\end{table}

For each city, we extract the data of five work days for further analysis. A summary statistics of the data and the size of the metro networks that corresponded to the period of available data is shown in Table~\ref{tab:metro_stat}. When compared with official statistics of daily ridership, we observe that the smart card transaction data may cover over 60\% of total daily travelers and can well reflect the trip dynamics of regular metro users. The metro networks in these cities have distinct layouts which are tailored to the urban form. Shanghai metro is the metro system with the longest total mileage and largest number of stations. It also has the highest number of daily travelers. Guangzhou and Shenzhen are similar in terms of the size of the metro networks, however, the shape of the metro network differs. In particular, Shenzhen is a stripe-shape city where commercial areas are located in the middle and residential places are distributed at east and west sides of the city. The layouts of metro networks and half-hourly passenger demand distributions of the three cities are presented in Figure~\ref{fig:metro_networks}. Note that the layouts presented here correspond to the period of time when the data were collected. 
\begin{figure}[!ht]
    \centering
    \subfloat[Guangzhou metro network]{\includegraphics[width=50mm,height=30mm]{fig/gz_metro.png}}
    \subfloat[Shanghai metro network]{\includegraphics[width=50mm,height=30mm]{fig/sh_metro.png}}
    \subfloat[Shenzhen metro network]{\includegraphics[width=50mm,height=30mm]{fig/sz_metro.png}}\\
    \subfloat[Passenger demand]{\includegraphics[width=60mm]{fig/demand_threecities.pdf}}
    \caption{Metro network layouts of the three cities and the half-hourly passenger demand distributions.}
    \label{fig:metro_networks}
\end{figure}

\begin{table}[!ht]
\centering
\caption{Summary of metro card transaction data from three major cities in China}
\begin{tabular}{p{2cm}|p{2cm}|p{2cm}|p{2cm}|p{2cm}|p{3cm}}
	\toprule
City      & Start date & End date   & \# metro lines & \# stations & Average daily records \\
\hline
Guangzhou & 2017.04.13           & 2017.04.17           & 8                     & 166                & 1.6 million           \\
Shanghai  & 2015.04.13 & 2015.04.17 & 13                    & 288                & 4.16 million       \\
Shenzhen  & 2016.04.14 & 2016.04.18 & 5                     & 118                & 2.13 million          \\
\bottomrule
\end{tabular}
\label{tab:metro_stat}
\end{table}

\subsection{Operation data}
In order to infer the contact among travelers, we also need operation data which include the trip time between two adjacent metro stations, the approximate transfer time at transfer stations, and the frequency of metro trains.

To obtain these data, we developed web crawlers and extracted the metro station adjacency matrix from GaoDe Map API~\cite{gaodemap} as the representation of the metro system layouts . In addition, the time tables of the three metro systems were obtained from their official websites~\cite{guangzhou_timetable,shanghai_timetable,shenzhen_timetable}, which contain the travel time between two stations as well as the frequency of the metro trains during different time periods. Finally, the transfer time required between at the transfer station is calculated by identifying a route that needs a transfer at the station, quoting the travel time of the route using GaoDe Map API and subtracting the travel time of the route based on the values that we obtained from the timetable.

\section{Metro Contact Network}
\label{SI:mcn_generation_algorithm}
\subsection{Unweighted metro contact network}
Based on the smart card transaction data and operation data for metro networks, we next develop the algorithm for constructing metro contact network (MCN). The contact network is constructed at individual level and we consider both unweighted and weighted contact networks. For unweighted MCN, each node represents a traveler and the link between a pair of nodes denotes that the two travelers will have positive probability to be on the same metro train. Since the smart card data only contain the time and location of a traveler entering and leaving the metro system, we need to infer if two travelers will be on the same metro train based on their trip starting time, origin station and destination station. And this further requires us to predict their travel route within the metro system. While the operation timetable of metro system is largely reliable, we assume that all travelers will follow the shortest route between two trip origin and destination (including both station-wise travel time and transfer time). Based on the predicted travel route, we can therefore determine if an link exists between two travelers following
\begin{enumerate}
	\item Find the shortest travel route for each traveler.
	\item For each pair of passenger $i$ and $j$, determine if they have overlapping travel segments $L_{ij}$.
	\item If $L_{ij}\geq 2$, determine their first meeting location, station $m$, and calculate their arrival time at the meeting station $t_{i,m},t_{j,m}$ respectively.
	\item Compute the probability of contact between travelers $i$ and $j$ based on $t_{i,m},t_{j,m}$ and the frequency $f$ of metro lines.
	\item Repeat this process until all pairs of travelers are processed. Output $G$.
\end{enumerate}
In step 1, the shortest route can be computed using the Dijkstra algorithm with the travel time adjacency matrix $A$, and the shortest routes are stored as sequences of the stations $p_i=\{s_1,s_2,..,s_P\}$ along the routes. In step 2 and 3, the overlapping travel segments of two travelers $i$ and $j$ can be identified as the longest common subsequence (LCS) of their routes $p_i$ and $p_j$. In our case, a valid LCS that may grant contact is the LCS of length 2 or higher, indicating that the two travelers share at least one trip segment. In step 4, $t_{i,m}, t{j,m}$ can be computed from their departure time and the trip time between their origin station and first meeting station $m$. Then their contact probability $p_{ij}$ follows
\begin{equation}
p_{ij}=\begin{cases}
1-\frac{|t_{i,m}-t_{j,m}|}{f}, \text{if $|t_{i,m}-t_{j,m}|<f$}\\
0,\,\text{if $|t_{i,m}-t_{j,m}|>=f$}
\end{cases}
\end{equation}
This suggests that two travelers will have positive contact probability if the gap between their arrival time at $m$ is less than the frequency $f$ of metro trains. And this probability decreases linearly considering the uniform arrival of metro trains following frequency $f$.
\subsection{Weighted metro contact network}
Based on unweighted MCNs, we further assign the weight to each link in unweighted MCNs to produce weighted MCNs. In the context of modeling the spread of communicable diseases, the weight on each link has the physical meaning as the \textbf{expected contact duration between two individuals within effective transmission range}. By effective transmission range, we consider that two individuals are close enough so that the airborne transmission of an communicable disease is feasible. This follows from the definition of the effective range for droplet transmission, which is usually less than 3 feet while certain disease such as SARS may reach 6 feet~\cite{siegel20072007}. Let $C$ denote the scaling parameter for the effective transmission range, we have the weight between traveler $i$ and $j$ as:
\begin{equation}
d_{ij}=\frac{p_{ij}L_{ij}}{C}
\label{SI_eq:contact_duration}
\end{equation}
Considering that a metro train consists of 6 coaches, with each coach of length 72 feet, then C may take the value of 144 if the effective transmission range is 3 feet. And equation~\ref{SI_eq:contact_duration} characterizes the expected contact duration as the product of the probability for being in effective transmission range $p_{ij}/C$ and the duration of the contact $L_{ij}$, with the underlying assumption that travelers will uniformly distribute themselves among all metro coaches. The use of $C$ naturally captures the behavior of travelers to avoid congested coaches during travel. With increasing number of travelers (e.g. more number of nodes in MCNs), this also characterizes the linearly increasing chance of close contacts, where the total contact duration of each individual is the row sum of the contact duration matrix. 

Finally, for the transmission rate of communicable disease, let $\beta$ denote the transmission strength per unit time, we have the transmission rate between two travelers as:
\begin{equation}
\beta_{ij}=\beta d_{ij}
\label{SI_eq:strength}
\end{equation}

With the above processes, we can use the smart card data to generate sample unweigted and weighted MCNs. Specifically, the smart card data can be aggregated to generate the distributions for trip origin and destinations and the arrival time at each station. We then sample $N$ travelers following the distributions, where each traveler has their time of arrival, the origin station and the destination station. And the MCNs with $N$ nodes can consequently be constructed following the generation process for MCNs. We denote $A$ as the adjacency matrix of the generated MCNs, with each entry $A_{ij}=d_{ij}$.

\subsection{Structural property of MCNs}
\label{SI:mcn_generation}

\begin{table}[!ht]
\centering
\caption{Summary statistics of the MCNs of various number of nodes for Guangzhou. In the table, $<k>$ represents the average unweighted degree and $<d>$ represents the average weighted degree.  }
\begin{tabular}{p{1cm}p{1cm}p{1cm}p{1cm}p{1cm}lllllll}
	\hline
Number of nodes & Average path length & Average clustering coefficient & \hspace{0pt}Assortativity & \hspace{0pt}Diameter & \textless{}k\textgreater{} & \textless{}$k^2$\textgreater{} & \textless{}$k_{max}$\textgreater{} & \textless{}d\textgreater{} & \textless{}$d^2$\textgreater{} & \textless{}$d_{max}$\textgreater{} \\
\toprule
500             & 2.99                & 0.48                   & 0.27          & 6.90     & 18.00                     & 416.37                                        & 45.10                          & 7.97                       & 102.52                                        & 30.74                          \\
1000            & 2.79                & 0.49                   & 0.26          & 7.20     & 35.09                      & 1,562.10                                      & 89.50                          & 15.95                      & 401.66                                        & 59.99                          \\
1500            & 2.68                & 0.49                   & 0.24          & 6.80     & 53.64                      & 3,610.10                                      & 126.70                         & 23.68                      & 873.08                                        & 89.61                          \\
2000            & 2.64                & 0.49                   & 0.25          & 7.10     & 71.03                      & 6,322.30                                      & 169.70                         & 31.83                      & 1,564.80                                      & 117.92                         \\
2500            & 2.60                & 0.49                   & 0.25          & 6.30     & 89.55                      & 10,027                                     & 209.70                         & 40.22                      & 2,503.40                                      & 148.33                         \\
3000            & 2.58                & 0.49                   & 0.25          & 6.40     & 106.15                     & 14,085                                     & 247.20                         & 47.35                      & 3,434.80                                      & 173.65                         \\
3500            & 2.55                & 0.49                   & 0.25          & 6.60     & 124.88                     & 19,444                                     & 292.20                         & 56.30                      & 4,862.50                                      & 208.17                         \\
4000            & 2.54                & 0.49                   & 0.24          & 6.40     & 141.92                     & 25,075                                     & 337.50                         & 63.51                      & 6,168.50                                      & 237.69                         \\
4500            & 2.52                & 0.49                   & 0.26          & 5.90     & 160.94                     & 32,346                                     & 378.70                         & 72.05                      & 7,966.40                                      & 259.52                         \\
5000            & 2.51                & 0.49                   & 0.24          & 6.40     & 177.68                     & 39,327                                     & 411.20                         & 79.09                      & 9,583.30                                      & 292.39                         \\
5500            & 2.50                & 0.49                   & 0.25          & 6.20     & 197.53                     & 48,585                                     & 461.80                         & 88.56                      & 12,000                                     & 318.63                         \\
6000            & 2.49                & 0.49                   & 0.24          & 5.90     & 212.67                     & 56,242                                     & 491.40                         & 95.82                      & 14,058                                     & 354.27                         \\
6500            & 2.49                & 0.49                   & 0.25          & 6.20     & 231.81                     & 67,051                                     & 541.50                         & 104.12                     & 16,644                                     & 377.04                         \\
7000            & 2.48                & 0.49                   & 0.25          & 5.90     & 249.96                     & 77,836                                     & 579.30                         & 112.71                     & 19,554                                     & 401.65                         \\
7500            & 2.47                & 0.49                   & 0.25          & 5.80     & 267.74                     & 89,419                                     & 626.30                         & 119.68                     & 21,986                                     & 443.35                         \\
8000            & 2.46                & 0.49                   & 0.25          & 5.90     & 285.40                     & 101,360                                    & 661.60                         & 127.38                     & 24,787                                     & 460.05                         \\
8500            & 2.46                & 0.49                   & 0.23          & 6.20     & 302.02                     & 113,320                                    & 688.30                         & 134.64                     & 27,515                                     & 494.58                         \\
9000            & 2.45                & 0.49                   & 0.24          & 5.90     & 321.19                     & 128,210                                    & 738.70                         & 144.12                     & 31,663                                     & 520.86                         \\
9500            & 2.45                & 0.49                   & 0.25          & 6.20     & 339.39                     & 143,330                                    & 776.80                         & 152.16                     & 35,326                                     & 541.30                         \\
10000           & 2.44                & 0.49                   & 0.25          & 6.00     & 356.57                     & 158,060                                    & 819.80                         & 159.84                     & 38,966                                     & 576.08                        \\
\bottomrule
\end{tabular}
\label{SI_tab:MCN_stat_gz}
\end{table}

\begin{table}[!ht]
\centering
\caption{Summary statistics of the MCNs of various number of nodes for Shanghai.}
\begin{tabular}{p{1cm}p{1cm}p{1cm}p{1cm}p{1cm}llllll}
	\toprule
Number of nodes & Average path length & Average clustering coefficient & \hspace{0pt}Assortativity & \hspace{0pt}Diameter & \textless{}k\textgreater{} & \textless{}$k^2$\textgreater{} & \textless{}$k_{max}$\textgreater{} & \textless{}d\textgreater{} & \textless{}$d^2$\textgreater{} & \textless{}$d_{max}$\textgreater{} \\
\toprule
500             & 3.14                & 0.46                   & 0.27          & 7.40     & 14.48                      & 280.46                                        & 41.90                          & 9.50                       & 139.37                                        & 35.67                          \\
1000            & 2.89                & 0.46                   & 0.30          & 7.60     & 29.14                      & 1125.50                                       & 77.20                          & 18.78                      & 528.15                                        & 67.91                          \\
1500            & 2.77                & 0.47                   & 0.29          & 7.80     & 43.56                      & 2483.10                                       & 117.40                         & 28.37                      & 1179.50                                       & 102.35                         \\
2000            & 2.71                & 0.47                   & 0.30          & 7.60     & 58.74                      & 4516.80                                       & 154.30                         & 37.91                      & 2085.60                                       & 136.65                         \\
2500            & 2.67                & 0.47                   & 0.30          & 7.80     & 73.57                      & 7098.20                                       & 192.60                         & 47.48                      & 3275.60                                       & 172.57                         \\
3000            & 2.64                & 0.47                   & 0.30          & 8.30     & 87.83                      & 10030                                      & 231.50                         & 57.26                      & 4737.90                                       & 204.34                         \\
3500            & 2.62                & 0.47                   & 0.29          & 7.50     & 102.58                     & 13642                                      & 274.60                         & 66.58                      & 6386                                       & 240.23                         \\
4000            & 2.60                & 0.47                   & 0.30          & 7.40     & 116.81                     & 17751                                      & 306.90                         & 75.84                      & 8276.20                                       & 261.84                         \\
4500            & 2.59                & 0.47                   & 0.30          & 7.30     & 131.69                     & 22497                                      & 344.40                         & 85.60                      & 10504                                      & 295.58                         \\
5000            & 2.57                & 0.47                   & 0.30          & 7.30     & 146.25                     & 27856                                      & 389.10                         & 95.46                      & 13144                                      & 335.95                         \\
5500            & 2.56                & 0.47                   & 0.30          & 7.40     & 160.56                     & 33407                                      & 423.10                         & 104.59                     & 15692                                      & 374.46                         \\
6000            & 2.55                & 0.47                   & 0.29          & 7.30     & 174.79                     & 39561                                      & 465                         & 113.71                     & 18553                                      & 410.25                         \\
6500            & 2.54                & 0.47                   & 0.29          & 7.30     & 190                     & 46854                                      & 504.10                         & 123.32                     & 21845                                      & 429.61                         \\
7000            & 2.53                & 0.47                   & 0.30          & 7.10     & 205.03                     & 54479                                      & 552.70                         & 133.28                     & 25513                                      & 471.11                         \\
7500            & 2.52                & 0.47                   & 0.29          & 8     & 221.12                     & 63337                                      & 584                         & 143.07                     & 29267                                      & 492.32                         \\
8000            & 2.52                & 0.47                   & 0.29          & 8     & 232.36                     & 69754                                      & 609.50                         & 151.18                     & 32655                                      & 526.39                         \\
8500            & 2.51                & 0.47                   & 0.29          & 6.90     & 251.55                     & 81864                                      & 656.70                         & 164.12                     & 38671                                      & 575.84                         \\
9000            & 2.50                & 0.47                   & 0.29          & 7.20     & 263.96                     & 89996                                      & 698                         & 171.97                     & 42270                                      & 612.77                         \\
9500            & 2.50                & 0.47                   & 0.29          & 7.40     & 277.66                     & 99696                                      & 734.20                         & 180.60                     & 46780                                      & 641.35                         \\
10000           & 2.49                & 0.47                   & 0.30          & 7.10     & 292.74                     & 110970                                     & 778.80                         & 190.79                     & 51981                                      & 676.15              \\
\bottomrule
\end{tabular}
\label{SI_tab:MCN_stat_sh}
\end{table}

\begin{table}[!ht]
\centering
\caption{Summary statistics of the MCNs of various number of nodes for Shenzhen.}
\begin{tabular}{p{1cm}p{1cm}p{1cm}p{1cm}p{1cm}llllll}
	\toprule
Number of nodes & Average path length & Average clustering coefficient & \hspace{0pt}Assortativity & \hspace{0pt}Diameter & \textless{}k\textgreater{} & \textless{}$k^2$\textgreater{} & \textless{}$k_{max}$\textgreater{} & \textless{}d\textgreater{} & \textless{}$d^2$\textgreater{} & \textless{}$d_{max}$\textgreater{} \\
\toprule
500             & 3.01                & 0.53                   & 0.26          & 6.70     & 20.51                      & 552.23                                        & 51.60                          & 11.54                      & 216.49                                        & 42.15                          \\
1000            & 2.78                & 0.54                   & 0.27          & 6     & 41.59                      & 2227.10                                       & 101.20                         & 23.45                      & 874.57                                        & 86.62                          \\
1500            & 2.69                & 0.54                   & 0.25          & 5.20     & 61.70                      & 4882.30                                       & 154.60                         & 35.44                      & 1996.70                                       & 132.55                         \\
2000            & 2.64                & 0.54                   & 0.23          & 5.60     & 83.32                      & 8839                                       & 199.90                         & 47.70                      & 3570.50                                       & 175.15                         \\
2500            & 2.60                & 0.54                   & 0.26          & 5.10     & 104.38                     & 13918                                      & 272.10                         & 60.10                      & 5650.10                                       & 223.27                         \\
3000            & 2.57                & 0.54                   & 0.25          & 5.10     & 125.54                     & 20095                                      & 328.70                         & 71.94                      & 8064                                       & 256.44                         \\
3500            & 2.56                & 0.54                   & 0.24          & 5     & 144.51                     & 26587                                      & 361.90                         & 82.46                      & 10620                                      & 295.50                         \\
4000            & 2.55                & 0.54                   & 0.24          & 5.10     & 166.57                     & 35318                                      & 404.40                         & 95.47                      & 14250                                      & 354.75                         \\
4500            & 2.53                & 0.54                   & 0.24          & 5     & 187.63                     & 44732                                      & 466.60                         & 107.36                     & 17976                                      & 390.15                         \\
5000            & 2.53                & 0.54                   & 0.25          & 5     & 206.49                     & 54322                                      & 522.20                         & 118.88                     & 22117                                      & 437.16                         \\
5500            & 2.51                & 0.54                   & 0.24          & 5     & 228.79                     & 66332                                      & 582.80                         & 131.54                     & 26888                                      & 491.68                         \\
6000            & 2.50                & 0.54                   & 0.24          & 5     & 249.88                     & 79360                                      & 656.50                         & 143.23                     & 31951                                      & 518.46                         \\
6500            & 2.50                & 0.54                   & 0.24          & 4.80     & 269.98                     & 92430                                      & 728.10                         & 154.43                     & 37126                                      & 571.30                         \\
7000            & 2.49                & 0.54                   & 0.24          & 5.20     & 291.62                     & 108070                                     & 776.50                         & 167.16                     & 43457                                      & 621.85                         \\
7500            & 2.48                & 0.54                   & 0.24          & 5     & 311.65                     & 123240                                     & 831.90                         & 177.91                     & 49027                                      & 641.73                         \\
8000            & 2.48                & 0.54                   & 0.24          & 5     & 333.22                     & 141000                                     & 893.10                         & 190.60                     & 56325                                      & 699.74                         \\
8500            & 2.48                & 0.54                   & 0.24          & 4.80     & 354.44                     & 159240                                     & 942.40                         & 203.12                     & 63857                                      & 724.36                         \\
9000            & 2.47                & 0.54                   & 0.24          & 4.80     & 372.76                     & 176220                                     & 980                         & 213.35                     & 70824                                      & 769.69                         \\
9500            & 2.47                & 0.54                   & 0.23          & 4.80     & 395.30                     & 198090                                     & 997.20                         & 225.65                     & 78760                                      & 792.65                         \\
10000           & 2.46                & 0.54                   & 0.24          & 4.80     & 414.51                     & 217520                                     & 1068.70                        & 237.12                     & 87273                                      & 871.27                        \\
\bottomrule
\end{tabular}
\label{SI_tab:MCN_stat_sz}
\end{table}

We simulate MCNs of different sizes to gain insights of the structural properties. We present the summary statistics of the MCNs of various number of nodes in Tab.~\ref{SI_tab:MCN_stat_gz} to \ref{SI_tab:MCN_stat_sz}. We are interested in the following representative network metrics and these metrics are the average of 10 random realizations of MCNs:
\begin{enumerate}
\item Average path length measures the mean shortest path length among all pair of nodes in MCN.
\item Average clustering coefficient is calculated following the definition in~\cite{watts1998collective} and measures the average cliquishness of individual travelers.
\item Assortativity measures the proclivity of a node to attach to another node of similar degree. This metric quantifies the similarity of the nodes that get into contact.
\item Diameter measures the longest shortest path of the MCNs.
\item $<k>$ is the average unweighted degree of the MCNs.
\item $<k^2>$ is the average second moment of the unweighted degree of the MCNs.
\item $<k_{max}>$ is the maximum degree of the unweighted MCNs.
\item $<d>$ is the average weighted degree of the MCNs.
\item $<d^2>$ is the average second moment of the weighted degree of the MCNs.
\item $<d_{max}>$ is the maximum weighted degree of the MCNs.
\end{enumerate}

\begin{figure}[!ht]
	\centering
	\includegraphics[width=0.3\linewidth]{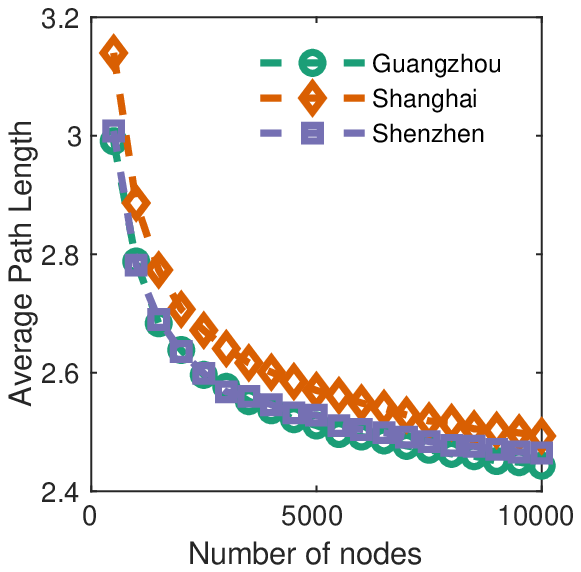}
	\caption{Change of average path length with increasing number of nodes in MCNs.}
	\label{SI_fig:converge_path}
\end{figure}

\begin{figure}[!ht]
	\centering
	\includegraphics[width=0.4\linewidth]{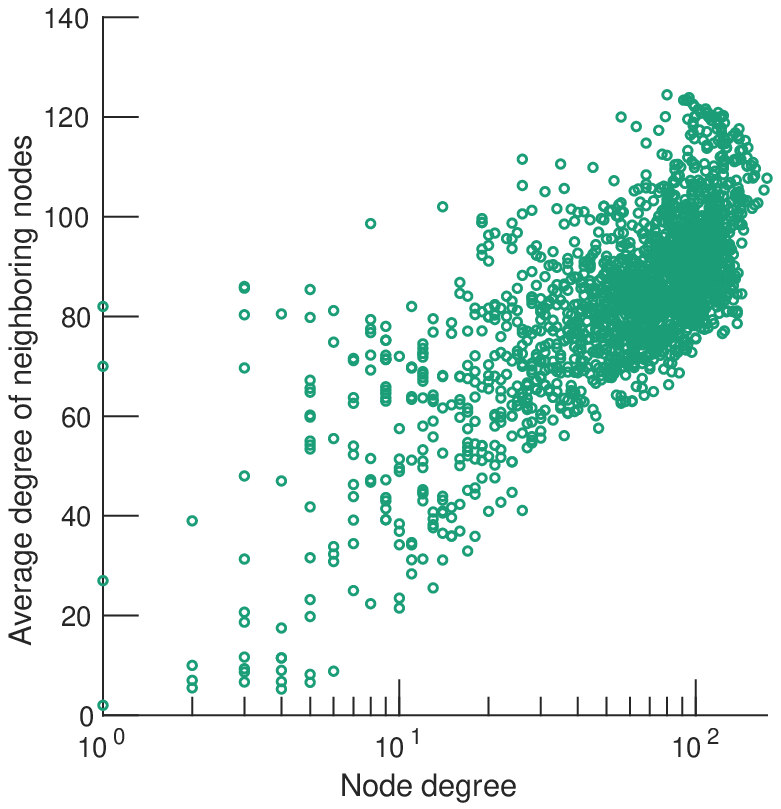}
	\caption{Correlation between node degree and the average degree of neighboring nodes. The results are obtained from a sample MCN with 1000 nodes for Guangzhou during 8:00-8:30 AM.}
	\label{SI_fig:gz_deg_corr}
\end{figure}

Despite the differences in scale and layout of the metro networks, we can immediately observe several structure properties that are universal across the MCNs. The MCNs of different cities and of various number of nodes all present high values of average clustering coefficient, short average path lengths and small network diameters. Moreover, these statistics are found to converge to fixed values with the number of nodes increases from 500 to 7,000 and then become invariant with further increases in the number of nodes in the network (see Fig.~\ref{SI_fig:converge_path} for the convergence of average path length). These results suggest that the structural properties of the MCNs are primarily determined by the layout and scale of the metro network. And the minor differences in the values of these network metrics are also reflections of the differences in their metro systems. Since Shanghai has the largest metro network, we observe the average path length and the network diameter are in general higher than those of Guangzhou and Shenzhen, and the average clustering coefficient is comparatively lower than other cities due to more diverse destinations among travelers. Finally, the assortativity values of the three cities imply that MCNs are weakly assortative where nodes are likely to be connected to other nodes with similar degree and this can be verified from the visualization in Fig.~\ref{SI_fig:gz_deg_corr}. In general we observe a positive correlation between the node degree and the average degree of the neighboring nodes, but there is also huge discrepancy among the average degree of the neighboring nodes for the nodes of similar degree. This indicates certain level of randomness in the number of contacts in the MCN, which is likely to depend on the time of arrival and the specific pair of trip origin and destination.

\begin{figure}[!ht]
	\centering
	\includegraphics[width=\linewidth]{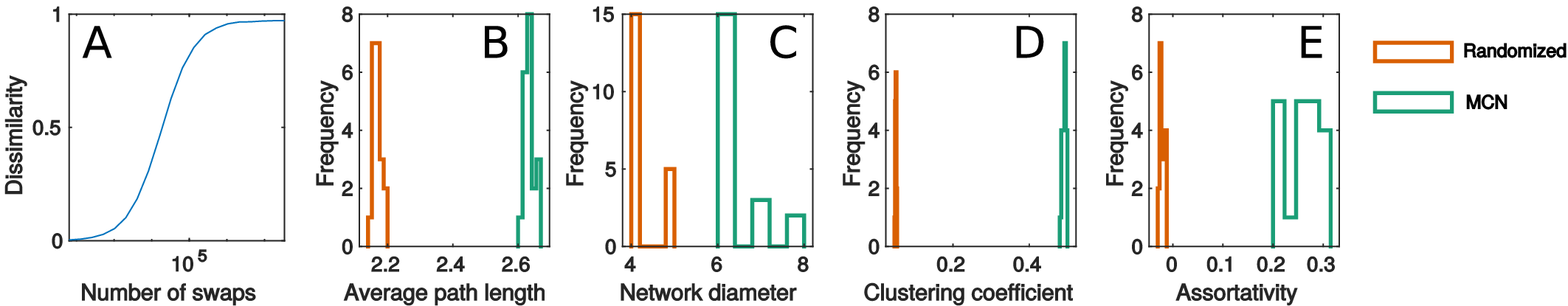}
	\caption{Randomization of simulated MCNs with 1000 nodes using data from Guangzhou during 8:00 to 8:30 AM. (A) presents the dissimilarity between the randomized network and the original MCN with increasing number of swaps. The two networks become almost completely dissimilar with $2^{22}$ swaps. We compare the distribution of the average path length (B), network diameter (C), clustering coefficient (D) and assortativity (E) before and after the randomization using 20 samples of the simulated MCNs.}
	\label{SI_fig:randomize}
\end{figure}

\subsection{Network randomization}
\label{SI:randomization}
To verify the statistical significance of the network metrics for MCNs, we conduct the randomization of the simulated MCNs by selecting two random links in the MCN and swap their endpoints, which is also known as XSwap~\cite{hanhijarvi2009randomization}. XSwap produces the randomization of the MCN while preserving the same degree distribution. We compare the network metrics discussed above between 20 samples of MCN and the corresponding randomized networks, and the results are shown in Fig.~\ref{SI_fig:randomize}. It is obvious that the differences in the distributions of network metrics between the two networks are statistically significant. This confirms the observed structural properties are distinct in MCNs. These results highlight that MCN is a special type of network that presents universal structural properties though being stemmed from metro systems of different scales and layouts.

\subsection{Degree distribution}
The universality of the MCN can also be observed from its unweighted and weighted degree distributions. We first observe that with the same number of nodes in MCNs, the average unweighted degree and weighted degree are different among the three cities. Shenzhen metro has the highest average unweighted and weighted node degree and also the largest variation of node degree, followed by Guangzhou and Shanghai respectively. This is likely because that Shenzhen has the smallest metro network among the three which results in higher chance of contact and hence higher clustering coefficient and average node degree. But the probability density function for the degree distributions of the three cities present striking similarities. As shown in Fig.~\ref{SI_fig:mcn_vs_random_vs_powerlaw}, the unweighted degree distribution shows that there is large proportion of nodes of degree smaller or equal to around half of the maximum degree in MCN and it has a tail that decays almost exponentially fast. One may be tempted to fit a power-law distribution to explain the decay of the tail. Indeed, many real networks are observed to be well explained by power-law distribution and we observe alike decaying trend between MCN and the power-law counterpart. And the MCNs are shown to be significantly different from the random networks in terms of the overall shape and the tail behavior. But there are two subtle differences that prevents the use of power-law distribution for characterizing the degree distribution of MCNs. First, as seen in Fig.~\ref{SI_fig:mcn_vs_random_vs_powerlaw}, the chance of having high-degree hubs in MCN is much lower as compared to the scale-free network of the same number of nodes and links. This indicates the decay of the tail is faster than that in the scale-free network. But more importantly is that MCNs are deemed to be scale-dependent and the degree distribution is closely associated with the number of nodes or equivalently the number of travelers in the metro system. This poses a fundamental contradiction to the philosophy behind the power-law distribution and its properties.

\begin{figure}[!ht]
	\centering
	\includegraphics[width=0.8\linewidth]{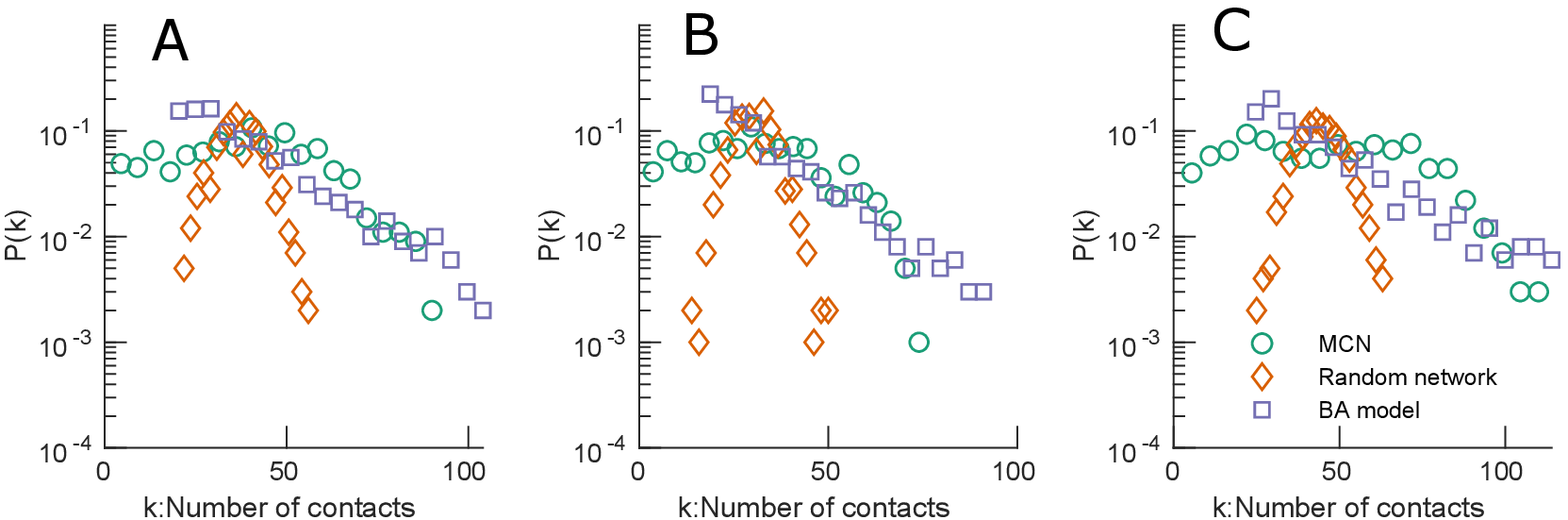}
	\caption{Comparing the probability density function of the degree distribution of MCNs with the random networks and scale-free networks (generated using the Barabasi–Albert (BA) model) with same number of nodes (N=1000) and links for (A) Guangzhou, (B) Shanghai and (C) Shenzhen.}
	\label{SI_fig:mcn_vs_random_vs_powerlaw}
\end{figure}

\begin{figure}[!ht]
	\centering
	\includegraphics[width=0.8\linewidth]{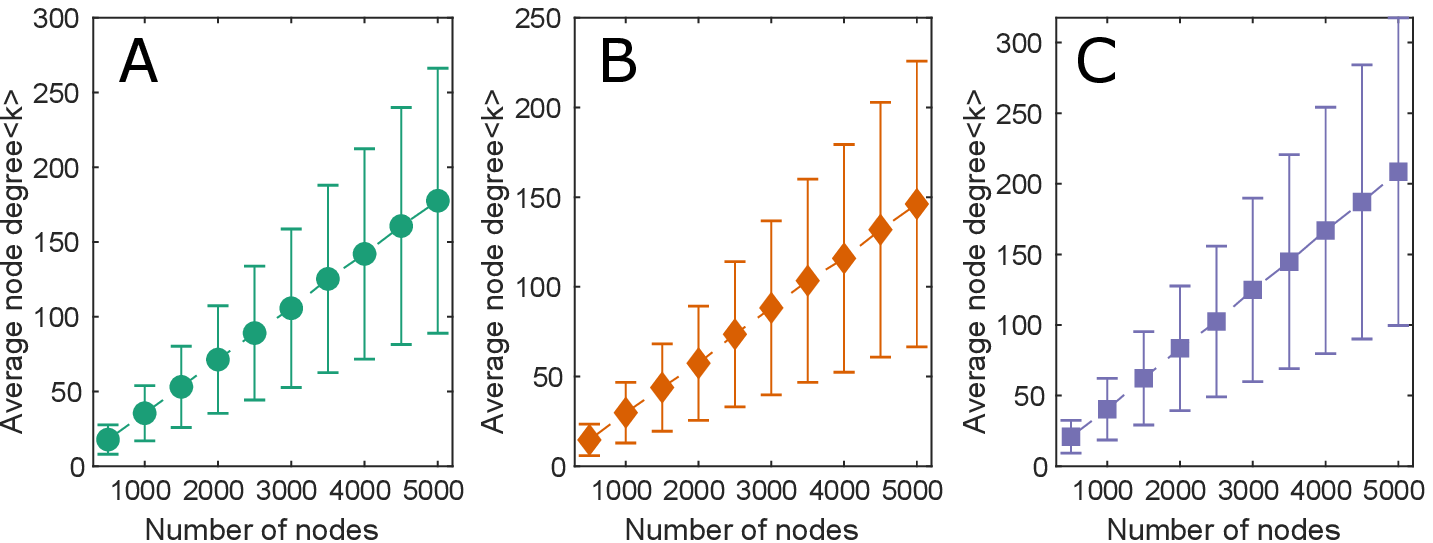}
	\caption{Change of node degree and the standard deviation of node degree with increasing number of nodes in MCNs. (A) Guangzhou, (B) Shanghai and (C) Shenzhen. }
	\label{SI_fig:deg_diverge}
\end{figure}

Nevertheless, the degree distribution of MCN presents several surprising properties that are usually seen in the scale-free network. The first is the possible divergence of $<k^2>$ and $<d^2>$ as shown in Fig.~\ref{SI_fig:deg_diverge}, where the standard deviation of the node degree increases with higher average node degree. Such phenomenon is one important reason that leads to the presence of scale-free property and this is observed in the MCNs for all three cities. In addition, despite seeing that the chance of large hubs is much lower in MCN than in the scale-free network, we empirically observe that the maximum degree of the MCN also increase linearly with increasing size of the network. This again is a unique property that is found in scale-free networks:
\begin{equation}
k_{max}\propto N^{\frac{1}{\gamma-1}}
\end{equation}
where $\gamma_t$ is the exponent  the power-law distribution.

In summary, by analyzing the simulated MCNs, we find that several structural properties of the MCNs are invariant to the size of the network and are primarily determined by the layout and scale of the metro systems. We show that these properties are rare in random networks and are likely to be distinct features of MCNs that are universal across different cities. But more importantly, while presenting fundamental differences when compared with scale-free networks, the MCNs also present universal structural properties that are usually found in scale-free networks. These findings define the MCN as a special class of networks that arises from the collective behavior of travelers and also the interplay between trip pattern and the metro system layout.





\section{Individual level disease transmission model}
Based on the constructed MCNs, we next model the percolation of communicable diseases on the MCNs with the individual based model (IBM). The IBM is adapted from the non-linear dynamical system approach in~\cite{chakrabarti2008epidemic}. In the IBM, each traveler is a node in MCN and the transmission takes place between two travelers with positive $\beta_{ij}$. We consider the classical susceptible-infectious-susceptible (SIS) model as the disease dynamics, while a more refined model such as SIR and SEIR can also be embedded.

The IBM model takes the following items as model input:
\begin{enumerate}
\item The unweighted adjacency matrix $G$ or weighted adjacency matrix $A$ of the MCN.
\item Disease parameters: unit transmission rate $\beta$ and recovery rate $r$, where $1/r$ represents the units number of time steps required for full recovery.
\end{enumerate}

\subsection{Disease transmission rate}
For communicable diseases that spread upon contact, it is well understood that the exposure duration and contact distance between two individuals are two contributing factors to a successful transmission. In our study, the strength of transmission between two individuals is denoted by equation~\ref{SI_eq:strength}. It measures the expected contact duration of two travelers based on their travel profile, and scales the probability of contact by considering the chance if two individuals are within effective transmission distance. As a consequence, we are able to measure the heterogeneous transmission rate between two travelers.
\subsection{The model}
\label{SI:IBM}

Denote $p_{i,t}$ as the probability that node $i$ is infected at time $t$. When an individual $i$ travels, the probability that $i$ stays healthy at time $t$ can be written as:
\begin{equation}
1-p_{i,t}=(1-p_{i,t-1})q_{i,t}+p_{i,t-1}r
\label{SI_eq:not_infect}
\end{equation}
where $q_{i,t}$ represents the probability that neighboring nodes of $i$ fail to transmit disease to node $i$ at time $t$. The first term on right hand side of the equation implies the node was healthy at time $t-1$ and is not infected at time $t$, and the second term suggests that the node was infected at time $t-1$ but recovered at time $t$.

The probability that all neighbors of $i$ failed to transmit the disease can be written as:
\begin{equation}
q_{i,t}=\prod_{j\in\mathcal{N}(i)} (1-p_{j,t}+(1-\beta_{i,j})p_{j,t})
\end{equation}
with $\mathcal{N}(i)$ denotes the set of neighboring nodes of $i$. The right hand side also also contains two parts: either a neighbor $j$ is not infectious at current time $t$ ($1-p_{j,t}$), or if $j$ is infectious but fails to transmit the diseases.

By rearranging equation~\ref{SI_eq:not_infect}, we can express the probability that node $i$ is infected at time $t$ as
\begin{equation}
p_{i,t}=1+p_{i,t-1}(q_{i,t}-r)-q_{i,t},\forall\,i\in V
\end{equation}
And the entire system dynamics over the MCN can be expressed in the matrix form as
\begin{equation}
\textbf{P}_t=\mathcal{G}(\textbf{P}_{t-1})
\label{SI_eq:disease_system}
\end{equation}
So that the disease spreading on MCN is characterized as a non-linear dynamic system.
\subsection{Condition for disease free equilibrium}
The disease dynamic system on MCN has two equilibrium states. One is the disease free equilibrium (DFE), where each individual is in S (healthy) state and the disease is completely eliminated. On the contrary is the endemic equilibrium, where there will always be a positive portion of nodes that are in I state. Formally, the DFE can be defined as
\begin{defn}[Disease free equilibrium (DFE)]
The system reaches the disease free equilibrium if at time $t$ $p_{i,t}=0$ for all nodes.
\end{defn}

The vulnerability of a metro system therefore corresponds to the stability condition for the IBM of the MCN to reach DFE. The stability of the system depends on how the system may return to equilibrium under small perturbation. If the perturbation diminishes and the system goes back to the equilibrium point, the DFE point is said to be asymptotically stable, otherwise the system will reach the endemic state. Before we establish the stability condition for DFE, we first introduce the Gershgorin Circle theorem~\cite{golub2012matrix} as follows

\begin{theorem}[Gershgorin circle theorem]
Every eigenvalue of a complete matrix A lies within at least one of the Gershgorin discs $D(a_{i,i},R_{i})$:
\begin{equation}
{\displaystyle |\lambda -a_{i,i}|=\left|\sum _{j\neq i}a_{i,j}x_{j}\right|\leq \sum _{j\neq i}|a_{i,j}||x_{j}|\leq \sum _{j\neq i}|a_{i,j}|=R_{i}.}
\end{equation}
where $\lambda$ is the eigenvalue of $A$.
\label{theorem:disc}
\end{theorem}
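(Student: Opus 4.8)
The plan is to run the classical maximal-component argument for an eigenvector. Let $\lambda$ be an arbitrary eigenvalue of $A$ and let $x\neq 0$ be an associated eigenvector. First I would choose an index $i$ for which $|x_i|=\max_j |x_j|$; since $x\neq 0$ this maximum is strictly positive, so I may rescale $x$ (eigenvectors are defined only up to a nonzero scalar) to make $x_i=1$, after which $|x_j|\le 1$ for every $j$.

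Next I would extract the $i$-th row of the identity $Ax=\lambda x$, namely $\sum_j a_{ij}x_j=\lambda x_i=\lambda$, and move the diagonal term to the left to get $\lambda-a_{ii}=\sum_{j\neq i}a_{ij}x_j$. This is exactly the first equality in the displayed inequality chain of the statement. Taking moduli and applying the triangle inequality gives $|\lambda-a_{ii}|\le\sum_{j\neq i}|a_{ij}|\,|x_j|$, and the normalization $|x_j|\le 1$ then yields $|\lambda-a_{ii}|\le\sum_{j\neq i}|a_{ij}|=R_i$. Hence $\lambda\in D(a_{ii},R_i)$, and since $\lambda$ was an arbitrary eigenvalue this establishes the theorem.

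There is no genuine obstacle here: the only step that requires care is the choice of the index $i$ realizing the component of largest modulus, because it is precisely the resulting bound $|x_j/x_i|\le 1$ that lets the two inequalities close — had I taken any other row, the factor $|x_j/x_i|$ could exceed $1$ and the estimate would break down. I would also remark in passing that applying the same argument to $A^{\top}$ yields the analogous statement with column sums in place of row sums; only the row-sum form is needed in the sequel, where it is used to bound the spectral radius of the Jacobian when deriving the stability condition for the DFE.
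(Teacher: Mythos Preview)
Your argument is correct and is precisely the standard maximal-component proof; indeed the paper does not supply a separate proof but simply embeds the inequality chain in the theorem statement itself, relying on exactly the normalization $|x_j|\le 1$ (after choosing $i$ with $|x_i|$ maximal) that you spell out explicitly. There is nothing to add or change.
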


Based on Gershgorin circle theorem, we develop the following condition for the stability of the DFE on MCN:
\begin{prop}
The DFE is asymptotically stable if $max_i(\sum_j(\beta_{i,j}))< r$.
\label{prop:DFE}
\end{prop}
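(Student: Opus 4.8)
The plan is to linearize the nonlinear map $\mathcal{G}$ in equation~\eqref{SI_eq:disease_system} about the DFE point $\mathbf{P}=\mathbf{0}$ and show that the Jacobian there is a stable (discrete-time) matrix, i.e. its spectral radius is strictly less than one, which gives local asymptotic stability. First I would write out a single coordinate $p_{i,t}=1+p_{i,t-1}(q_{i,t}-r)-q_{i,t}$ and expand $q_{i,t}=\prod_{j\in\mathcal{N}(i)}(1-\beta_{i,j}p_{j,t})$ to first order in the $p_{j,t}$: near $\mathbf{0}$ we have $q_{i,t}\approx 1-\sum_{j}\beta_{i,j}p_{j,t}$, and products of two or more $p$'s are higher order and dropped. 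Substituting, the $1-q_{i,t}$ term contributes $\sum_j \beta_{i,j}p_{j,t}$ while $p_{i,t-1}(q_{i,t}-r)\approx (1-r)p_{i,t-1}$ to first order (the cross term $p_{i,t-1}\sum_j\beta_{i,j}p_{j,t}$ is quadratic). One subtlety I need to handle carefully is that $q_{i,t}$ depends on $\mathbf{P}_t$, not $\mathbf{P}_{t-1}$, so the recursion is implicit; but to first order near the equilibrium one can still read off the Jacobian of the composed map, or equivalently note that $\sum_j\beta_{i,j}p_{j,t}$ itself is first order so evaluating it at $\mathbf{P}_{t-1}$ versus $\mathbf{P}_t$ differs only at second order. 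The upshot is the linearization $\mathbf{P}_t \approx M\,\mathbf{P}_{t-1}$ with $M = (1-r)I + B$, where $B$ is the matrix with entries $\beta_{i,j}$ (and $\beta_{i,i}=0$).

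Next I would apply Theorem~\ref{theorem:disc} to $M$. Each Gershgorin disc of $M$ is centered at $M_{i,i}=1-r$ with radius $R_i=\sum_{j\neq i}\beta_{i,j}=\sum_j\beta_{i,j}$, so every eigenvalue $\lambda$ of $M$ satisfies $|\lambda-(1-r)|\le \sum_j\beta_{i,j}$ for some $i$, hence $|\lambda|\le 1-r+\sum_j\beta_{i,j}\le 1-r+\max_i\sum_j\beta_{i,j}$. Under the hypothesis $\max_i\sum_j\beta_{i,j}<r$ this bound is strictly less than $1$; since $r\in(0,1]$ and the $\beta_{i,j}\ge 0$, the bound is also nonnegative, so $\rho(M)<1$. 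A standard result from the theory of discrete dynamical systems (the principle of linearized stability / Lyapunov's indirect method for maps) then says that the fixed point $\mathbf{0}$ of $\mathcal{G}$ is locally asymptotically stable, which is exactly the claim.

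The main obstacle I anticipate is making the linearization step rigorous and clean: one must verify that $\mathcal{G}$ is indeed differentiable at $\mathbf{0}$ (it is, being a polynomial map in the $p$'s), correctly compute $\partial p_{i,t}/\partial p_{j,t-1}$ accounting for the implicit dependence through $q_{i,t}$, and confirm that all the dropped terms are genuinely $O(\|\mathbf{P}\|^2)$ so that higher-order products of probabilities do not affect the leading-order Jacobian. I would also state explicitly which fixed-point stability theorem is being invoked and note that $\mathbf{0}$ is a valid fixed point of $\mathcal{G}$ (checking $q_{i,t}=1$ and hence $p_{i,t}=0$ when all $p_{j,t-1}=0$). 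Everything after obtaining $M=(1-r)I+B$ is a short, direct computation with the Gershgorin bound, so the conceptual weight of the proof sits entirely in the linearization.
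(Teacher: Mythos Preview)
Your proposal is correct and follows essentially the same route as the paper: linearize the map at $\mathbf{P}=\mathbf{0}$ to obtain the Jacobian $M=(1-r)I+B$, then apply the Gershgorin circle theorem to bound $\rho(M)\le 1-r+\max_i\sum_j\beta_{i,j}<1$. If anything you are more careful than the paper, which silently computes $\partial \mathbf{P}_t/\partial \mathbf{P}_{t-1}$ without addressing the implicit dependence of $q_{i,t}$ on $\mathbf{P}_t$ that you flag; your observation that this only perturbs the Jacobian at second order is exactly what is needed to justify that step.
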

\begin{proof}
We proof the proposition by linearizing the non-linear dynamic system $\mathcal{G}(P_{t-1}=0)$ at the DFE and measuring the partial derivatives $K$:
\begin{equation}
K=\frac{\partial \textbf{P}_t(0)}{\partial \textbf{p}_{t-1}}
\end{equation}
where we have
\begin{equation}
K_{i,j}=-r+1,\quad \text{if $i=j$}
\end{equation}
\begin{equation}
K_{i,j}=\beta_{i,j}, \quad \text{if i$\neq$j and i,j are adjacent}
\end{equation}
\begin{equation}
K_{i,j}=0, \quad \text{o.w.}
\end{equation}
Therefore we have
\begin{equation}
K=(1-r)I+B
\end{equation}
For the DFE to be stable, it must be satisfied that the largest eigenvalue of $K$ is less than 1:
\begin{equation}
\rho(K)< 1
\label{SI_eq:egv}
\end{equation}
Define $\delta$ as an upper bound of the eigenvalue of $K$. Since all diagonal entries of $K$ are identical, by applying Theorem~\ref{theorem:disc}, we have
\begin{equation}
\rho(K)\leq \delta= \text{max}_i(R_i(B))+K_{i,i}= \text{max}_i(\sum_j(\beta_{i,j}))+1-r
\end{equation}
To satisfy the condition in equation~\ref{SI_eq:egv}, we require the upper bound $\delta$ to satisfy:
\begin{equation}
\text{max}_i(\sum_j(\beta_{i,j}))+1-r<1
\end{equation}
This gives that $max_i(\sum_j(\beta_{i,j}))< r$ and completes the proof.
\end{proof}

Proposition~\ref{prop:DFE} has several important implications. The risk level of the MCN is shown to be dictated by the individual who has the highest risk exposure. As long as the exposure rate of this particular individual is smaller than the recovery speed, the system will reach DFE. Otherwise the system may be either DFE or endemic. However, in practice, if we would like to control the spread of communicable diseases, it is unlikely that we may identify who exactly this person is. Even if this person is spotted, vaccine/quarantine the individual does not necessarily reduce the risk level of the overall system, since the second riskiest person may have similar level of risk exposure. This implies that we would also need to examine the structure of the contact network to devise feasible control strategies. In addition, the model provides the solution to monitor the vulnerability of metro systems at very fine scale and identify the periods of time that are of particularly high risk level.

\section{OD-level model}
\label{SI:ODMF}
One drawback of the IBM model, however, comes from its computational bottleneck. It will be very expensive to generate large-scale MCN with millions of passengers that copes with the passenger demand level in real-world scenarios. In this regard, we also develop a metapopulation model based on the flow of travelers between each pair of origin and destination (OD) pair. The OD level model can be used to monitor the risk level of metro systems, but it does not reveal any insights on the contact pattern among individual travelers. The OD-level model treats each pair of OD as the set of nodes and the contagion pattern between OD pairs as the set of links. It can be readily seen that the total number of OD pairs in a given metro network is the square of number of stations, which is much more scalable as compared to constructing contact networks for millions of travelers. Denote $S_i$ and $I_i$ as the susceptible population and infected population of OD pair $i$, and let $\mathcal{P}$ be the set of OD pairs in the network, we have the following equations
\begin{equation}
E_{i,j}=\beta\bar{d}_{j,i}S_iI_j
\end{equation}
where $E_{i,j}$ represents the proportion of susceptible population of $i$ being infected by the infectious population of $j$ and $\bar{d}_{j,i}$ is the expected contact duration between OD pairs $i$ and $j$. The disease dynamics at the OD level can therefore be written as:
\begin{equation}
\frac{d I_i}{d t} = -r I_i + \sum_{j\in\mathcal{P}} E_{i,j}, \forall i\in\mathcal{P}
\label{SI_eq:dynamic1}
\end{equation}
Moreover, since $I_i+S_i=N_i$, where $N_i$ is the total number of travelers for OD pair $i$, equation~\ref{SI_eq:dynamic1} can be further rewritten as:
\begin{equation}
\frac{d I_i}{d t} = -r I_i + \sum_{j\in\mathcal{P}} \beta\bar{d}_{j,i}N_iI_j-\sum_{j\in\mathcal{P}} \beta\bar{d}_{j,i}I_iI_j, \forall i\in\mathcal{P}
\label{SI_eq:dynamic2}
\end{equation}
And the matrix form is therefore
\begin{equation}
\frac{d I}{d t} = FI+b(I)
\label{SI_eq:dynamic3}
\end{equation}
where $F$ is a square matrix with its entry: $F_{ii}=\beta\bar{d}_{i,i}N_iI_i-r$ and $F_{ij}=\beta\bar{d}_{j,i}N_iI_j$. $b(I)$ is a column vector with its entry being $b(I)_i=-\sum_{j\in\mathcal{P}} \beta\bar{d}_{j,i}I_iI_j$. Equation~\ref{SI_eq:dynamic3} gives the disease dynamics at the OD level.
%

We can see that one important difference between the OD-level model and the IBM is that the OD-level model use $\bar{d}_{ij}$ as the aggregate representation of the contact duration between all travelers of OD pair $i$ and travelers of OD pair $j$, rather than the individual level contact duration $d_{ij}$ between travelers $i$ and $j$. As a result, it sacrifices the fidelity for modeling disease at the individual model, but can be used for understanding the system level dynamics more efficiently.
%
%
%

\section{Generation model}
\label{SI_section:generation_model}
By observing that multiple metro networks in different cities share very similar degree distributions in their MCNs,
we next establish the generation mechanism to model how the MCNs are shaped during travel. The goal is to build a single generation mechanism that is capable of restoring the MCNs of all cities to support the universality of MCNs.

Being different from many other networks, the MCNs are special in the way that \textbf{the number of links each node being adjacent should be a function of the total number of nodes in the MCNs.} This corresponds to the congestion effect in the metro system with more number of travelers. Consequently, the MCNs may not be generated in the way like preferential attachment~\cite{newman2001clustering} where new nodes and links are added sequentially. Instead, we follow a process where we first estimate the total number of links in the network and then assign the links among the nodes in a way similar to the configuration model.

To cope with the congestion effect in MCNs, we consider the expected number of contacts each node may encounter as:
\begin{equation}
c_i=  \alpha t_i^{\gamma_t} (N-1)
\label{SI_eq:individual_contact}
\end{equation}
This states that the number of contacts is proportional to the rescaled travel time $t_i^{\gamma_t}$ and the number of nodes $N$. In particular, with $0<\gamma_t\leq 1$, $t_i^{\gamma_t}$ suggests more expected number of contacts with increasing travel time, and if we take the derivative of $t_i^{\gamma_t}$ with respect to $t_i$, we have
\begin{equation}
\frac{d t_i^{\gamma}}{dt_i}=\gamma_t t_i^{\gamma_t-1}
\end{equation}
which implies that, in contrast to the rescale as $\gamma t_i$, the number of contacts does not increase linearly with increasing $t_i$. Instead, the increase rate will drop with the increase in travel time. Indeed, the number of contacts one may have with 40 minutes of travel should not be 10 times that of the number contacts as if one travels for 4 minutes. Regarding the value of $\gamma_t$, we define it as the similarity coefficient that measures the 'similarity' of travels among all travelers. Higher value of $\gamma_t$ indicates that, on average, a traveler will have higher contact chance with another traveler, e.g., two travelers are more likely to travel in the same direction to the same destination. On the other hand, the value of $\alpha$ reflects the scale of the metro network, with the physical meaning being the contact rate per individual traveler per unit time of travel. As a result, $\alpha$ is a system dependent value and varies across the cities.

With the above definition, we can approximate the total number of links (contacts) in the MCNs as:
\begin{equation}
C=\sum_{i=1}^N c_i
\end{equation}
And if we consider each link as two stubs (half links), denoting $M=2C$ and $m_i=2c_i$, we then assign these stubs to each node based on their contribution $t_i^{\gamma_t}$, where the probability that a randomly chosen stub is adjacent to node $i$ with travel time $t_i$ as:
\begin{equation}
w_i=\frac{m_i}{M}=\frac{t_i^{\gamma_t}}{\sum_{j=1}^N t_j^{\gamma_t} }
\end{equation}
While each stub counts as one degree for each node, we can therefore write down the probability density function that a node of travel time $t_i$ is of degree $k$ follows the binomial distribution:
\begin{equation}
 p(k|t_i)={M \choose k} w_i^k (1-w_i)^{M-k}
\end{equation}
Then for a randomly selected node in the MCN, the probability density function for the number of contacts follows
\begin{equation}
p(k)=\sum_{i=1}^N p(t_i){M \choose k} w_i^k (1-w_i)^{M-k}
\end{equation}
where $p(t_i)$ is the probability density function for the human mobility within metro system and is observed to be well captured by the exponential distribution. With large $M$, we can approximate the binomial distribution as the Poisson distribution and hence we have
\begin{equation}
p(k)=\sum_{i=1}^N \frac{(Mw_i)^ke^{-Mw_i}}{k!}p(t_i)
\label{SI_eq:generation_unweighted}
\end{equation}
and this gives the probability density function for the unweighted MCN.

Following equation~\ref{SI_eq:generation_unweighted}, we can subsequently generate a MCN with $M$ stubs attached to each node. To produce the weighted MCN, we follow the process of the configuration model as:
\begin{enumerate}
	\item Randomly selected two stubs in the unweighted MCN, with the nodes adjacent to the stubs as $i$ and $j$.
	\item Connect the selected stubs with an link, and assign the weight to the link: $d_{ij}\propto min(t_i,t_j)$.
	\item Repeat the above two steps until all stubs are exhausted. Output the weighted MCN.
\end{enumerate}
In summary, the above procedure describes the growth of MCN as a two-stage process where we first determine if two travelers will get into contact and then decide the duration of their contact which is assumed to be proportional to the shorter travel time of the two.

%
%
%
%
%

\subsection{Validation}
\label{SI:parameter_validation}
The validation of the generation model involves the calibration of the model parameters and then verify if the calibrated generation model is representative of the simulated MCNs from the smart card data. To correctness of the generation model is validated using the two sample Kolmogorov–Smirnov (KS) test~\cite{massey1951kolmogorov} to compare the CDF of the degree distribution of the MCN from the generation model and the CDF of the degree distribution of the MCNs simulated from the smart card data. The null hypothesis of the KS test is that the two data samples for comparison are drawn from the same continuous distribution. Specifically, we binned the degree distribution of each MCN into 20 equal-length intervals, and conduct KS test on the probability distribution of the binned data.

The calibration of model parameters is to find the best $\gamma_t$ and $\alpha$ value that leads to the best goodness of fit between the generation model and the simulated MCNs. In particular, we have different $\gamma_t$ for different time intervals while $\alpha$ is held the same across all time intervals for a particular city. Since we do not have a closed form representation for the probability density function of the weighted MCNs, we conduct cross-validation to find the pair of parameters that minimizes the KS statistics. For each city, we consider $\alpha$ being time invariant since it captures the impacts of metro network structure,  and $\gamma_t$ will change over time to reflect temporal variations of passenger trip patterns. We perform cross-validation to determine the optimal $\alpha$ and $\gamma_t$ for each city and for each time period, with the selection criteria being the parameter combination that gives the lowest sum of KS statistics for weighted degree distribution and unweighted degree distribution of the MCNs.

\begin{figure}[!ht]
	\centering
	\includegraphics[width=0.7\linewidth]{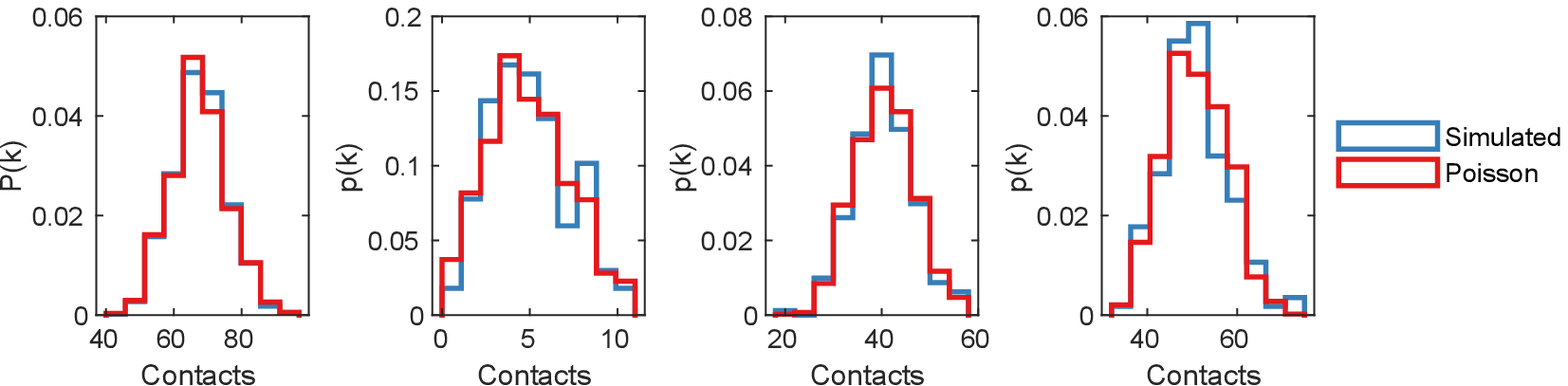}
	\caption{Validation of the Poisson distributed number of contacts.}
\end{figure}

\begin{figure}[!ht]
	\centering
	\includegraphics[width=0.9\linewidth]{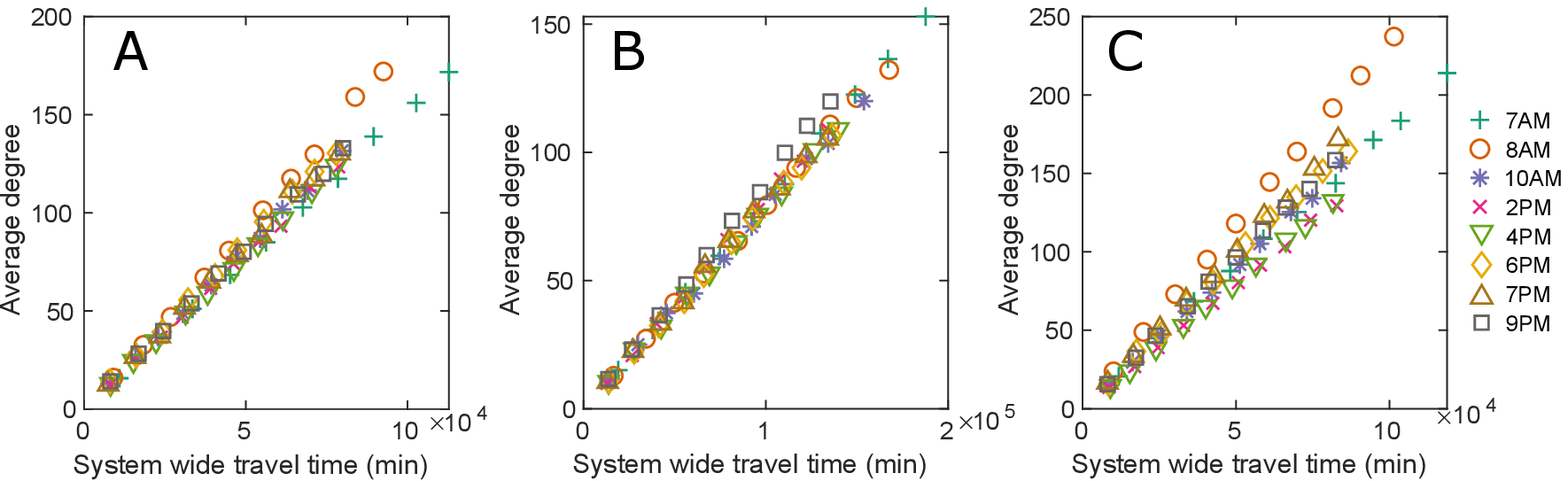}
	\caption{Relationship between the number of contacts per individual and the total system travel time. (A) Guangzhou, (B) Shanghai and (C) Shenzhen. The change of travel time reflects the increase in number of nodes from 500 to 5000 with an increment of 500 at each step. }
	\label{SI_fig: contacts_travel_time}
\end{figure}

\subsection{Measure of similarity}
\label{SI:similarity_measures}
To further validate the correctness of $\gamma_t$, we present the metric for measuring the trip similarity among all travelers and we compare the computed metrics to the fitted $\gamma_t$ values. The similarity is measured by first constructing the correlation matrix $Q$ of the trip pairs, where for each entry of $Q$:
\begin{equation}
Q_{i,j}=f_{i}f_{j}\sigma_{i,j}
\end{equation}
where $f_{i}$ and $f_{j}$ represent the normalized trip flow traveling on OD pair $i$ and OD pair $j$ (number of nodes on OD pair $i$ divided by total number of nodes in MCN). $\sigma_{i,j}=\frac{d_{i,j}}{max d_{i,j}}$ refers to the standardized contact duration between the two OD pairs. In this regard, each entry of $Q$ measures the pairwise contagion strength and row $i$ of $Q$ therefore gives the level of correlation of OD pair $i$ with all other OD pairs. And the correlation depends on both the demand level as well as the contact duration.

Given the correlation matrix $Q$, we next extract the top $n$ eigenvalues of $Q$ as ${\lambda_1, \lambda_2,...,\lambda_n}$, and we define similarity index as the standard deviation among the top $n$ eigenvalues:
\begin{equation}
s=\sqrt{\frac{\sum(\lambda_i-\bar{\lambda})^2}{n}}
\end{equation}
where $\bar{\lambda}=\sum \lambda_i/n$ refers to the mean of the eigenvalues. With the normalization of trip flow and standardization of contact duration, we restrict $s$ to lie between 0 and 1.

The idea of similarity index is related to the principal component analysis of the correlation matrix, where the trace of the correlation matrix measures the total variance and the top $n$ principal components seek to maximize the variance. The standard deviation among the top $n$ eigenvalues therefore measures the differences in the total contribution to the total variance of each component, and hence reflects the trip differences among passengers. In particular, if those trips are totally uncorrelated and the trips, then the eigenvalues are all of the same value and the standard deviation among them is simply 0. An example in metro system is that the demand are evenly distributed among all OD pairs and these OD pairs have no overlapping segments to enable contacts. On the other hand, if some of the trip pairs are highly correlated, we should have few eigenvalues of value much higher than others, which gives rise to the large standard deviation. An extreme case in the metro system is that all travelers leave from the same origin to the same destination so that these trips are perfectly correlated and the standard deviation is therefore 1. The correlation matrix for metro systems is of size $N^2 \times N^2$ ($N$ here is the number of stations) and we do not need to compute all $N^2$ eigenvalues. Instead, based on empirically observations, we find that eigenvalues drop quickly to nearly zero and we therefore set $n=200$.


%
%
%
%
%
%
%
%
%

%
%
%
%
%
\section{First and second moment of node degree in MCN}
Here we develop the first and second moment of the MCNs based on the generation model. We also derive an approximation of the largest node degree in a given MCN. These help to gain further insights on the degree distribution of the MCNs with increasing number of nodes. 
\label{SI:first_second_moment}
\subsection{First moment $<k>$}
From the generation model for MCN we have:
\begin{equation}
p(k)= \sum_{i=1}^N e^{-Mw_i}\frac{(Mw_i)^k}{k!}p(w_i)
\end{equation}

And we can calculate the average degree of the MCN as:
\begin{equation}
<k>=\sum_{k=1}^K p(k)k=\sum_{k=1}^K k\sum_{i=1}^N e^{-Mw_i}\frac{(Mw_i)^k}{k!}p(w_i) = \sum_{i=1}^N p(w_i) \{\sum_{k=1}^K k e^{-Mw_i}\frac{(Mw_i)^k}{k!} \}
\end{equation}
When $K\rightarrow\infty$, the summation of discrete degree can be replaced with the integration:
\begin{equation}
\sum_{k=1}^K ke^{-Mw_i}\frac{(Mw_i)^k}{k!}=\int_{0}^{K} ke^{-Mw_i}\frac{(Mw_i)^k}{k!} dk = Mw_i
\end{equation}
where $M_wi$ is the mean of the binomial distribution of $M$ trials and $w_i$ rate of success. And we therefore have:
\begin{equation}
<k>=\sum_{i=1}^N p(w_i) Mw_i
\end{equation}
Note that $p(w_i)=p(t_i)$ represents the probability density function for human mobility in metro network, which we find to be approximated by an exponentially decaying tail. We consider that
\begin{equation}
Mw_i=\alpha (N-1) t_i^{\gamma_t}, \quad p(t_i)=be^{-t_i/\lambda}
\end{equation}
Then
\begin{equation}
<k>=\sum_{i=1}^N \alpha (N-1) t_i^{\gamma_t} be^{-t_i/\lambda} \approx \int_0^{t_{max}} \alpha (N-1) t^{\gamma_t} be^{-t/\lambda} dt
\end{equation}
Where the integration gives
\begin{equation}
\int t^{\gamma_t} be^{-t/\lambda} dt =-b\lambda^{\gamma_t+1}\Gamma(\gamma_t+1,\frac{t}{\lambda})+ C
\end{equation}
where $\Gamma(m,n)$ is the upper incomplete Gamma function with $\Gamma(m,n)\rightarrow 0 $ if $n\rightarrow \infty$, and $\Gamma(m,0)=\Gamma(m)$. This suggests that
\begin{equation}
<k> =\alpha (N-1) b \lambda^{\gamma_t+1} \Gamma(\gamma_t+1)
\end{equation}
which suggests that the \textbf{average degree of MCN is linearly proportional to the number of nodes in the network}.
\subsection{Second moment $<k^2>$}
In addition, we can also calculate $<k^2>$ as
\begin{equation}
<k^2>=\sum_{k=1}^K p(k)k^2=\sum_{i=1}^N p(w_i)\{Mw_i(1-w_i+Mw_i)\}
\end{equation}
where $Mw_i(1-w_i+Mw_i)$ represents the second moment of the binomial distribution.
Following the same procedure for deriving $<k>$, we arrive at the expression of $<k^2>$ as
\begin{equation}
<k^2>=\alpha^2 (N-1)^2 b \lambda^{2\gamma_t+1} \Gamma(2\gamma_t+1) + O(N)
\end{equation}
\textbf{This implies that the variance of MCN scales quadratically to the increase in number of nodes. These results explain the divergence of $<k^2>$ with $N\rightarrow \infty$}.

\subsection{Max degree node $k_{max}$}
To estimate the maximum degree in MCN, let we consider the probability that
\begin{equation}
\begin{aligned}
\int_{k_{max}}^{\infty} p(k)dk=1-p(k_{max})&= 1-\sum_{i=1}^Np(w_i) \sum_{k=1}^{k_{max}}e^{-Mw_i}\frac{(Mw_i)^{k_{max}}}{k_{max}!}\\
&\approx \sum_{i=1}^Np(w_i)e^{-Mw_i}\frac{(Mw_i)^{k_{max}+1}}{(k_{max}+1)!}\\
&\approx \frac{bB^{k_{max}+1}\Gamma(k_{max}+2)}{(k_{max}+1)!(B+1/\lambda)^{k_{max}+2}}\\
\end{aligned}
\end{equation}
where $B=\alpha (N-1)$.

And for the maximum degree, we expect that
\begin{equation}
\int_{k_{max}}^{\infty} p(k)dk=\frac{1}{N}
\end{equation}
so that there is one node that is within the range $[k_{max},\infty]$. This condition suggests that
\begin{equation}
\frac{B^{k_{max}+1}}{(B+1/\lambda)^{k_{max}+2}}=\frac{1}{Nb}
\end{equation}
By taking the natural log on both sides, the equation simplifies to
\begin{equation}
\begin{aligned}
k_{max}+1&=\frac{ln(Nb)-ln(B+1/\lambda)}{ln(B+1/\lambda)-ln(B)}\\
&\propto \frac{1}{ln(1+1/(\alpha\lambda (N-1)))}\\
&\propto \alpha \lambda (N-1)
\end{aligned}
\end{equation}
where for the last step we make use of the Taylor series expansion for log values
\begin{equation}
ln(1+1/(\alpha\lambda (N-1))=\frac{1}{\alpha\lambda (N-1)}+O(1)
\end{equation}
This result indicates that the \textbf{maximum degree of the MCN is linearly proportional to the number of nodes in the network}.

%
%
%
%

\bibliographystyle{unsrt}
\bibliography{sample}